\newsavebox{\tablebox}
\newcommand {\tot} {\mathit{tot}}
\newcommand {\pal} {\mathit{par}}
\def\>{\ensuremath{\rangle}}
\def\<{\ensuremath{\langle}}
\def\-{\ensuremath{\textrm{-}}}
\def\apply{\mathrel{*\!\!=}}
\def\qVar{\ensuremath{\mathit{qv}}}
\def\QVar{\ensuremath{\mathcal{V}}}
\def\Exp{\mathit{Exp}}
\def\h{\ensuremath{\mathcal{H}}}
\def\p{\ensuremath{\mathcal{P}}}
\def\l{\ensuremath{\mathcal{L}}}
\def\g{\ensuremath{\mathcal{G}}}
\def\lh{\ensuremath{\mathcal{L(H)}}}
\def\dh{\ensuremath{\mathcal{D(H})}}
\def\dhv{\ensuremath{\d(\h_{\QVar})}}
\def\m{\ensuremath{\mathcal{M}}}
\def\u{\ensuremath{\mathcal{U}}}
\def\s{\ensuremath{\mathcal{S}}}
\def\u{\ensuremath{\mathcal{U}}}
\def\x{\ensuremath{\mathcal{X}}}
\def\z{0}
\def\ra{\ensuremath{\rightarrow}}
\def\a{\ensuremath{\mathcal{A}}}
\def\e{\ensuremath{\mathcal{E}}}
\def\f{\ensuremath{\mathcal{F}}}
\def\l{\ensuremath{\mathcal{L}}}
\def\N{\mathbb{N}}
\def\quiz{\ensuremath{|i\>_{\bar{q}}\<0|}}
\def\quzi{\ensuremath{|0\>_{\bar{q}}\<i|}}
\def\d{\ensuremath{\mathcal{D}}}
\def\dh{\ensuremath{\mathcal{D(H)}}}
\def\lh{\ensuremath{\mathcal{L(H)}}}
\def\le{\ensuremath{\sqsubseteq}}
\def\ge{\ensuremath{\sqsupseteq}}
\def\bf{\mathbf{f}}
\newcommand{\tr}{{\rm tr}}
\newcommand{\ass}[3]{\left\{#1\right\}\ #2\ \left\{#3\right\}}
\newcommand {\true} {\ensuremath{{\mathbf{true}}}}
\newcommand {\false} {\ensuremath{{\mathbf{false}}}}
\newcommand {\abort}{\ensuremath{{\mathbf{abort}}}}
\newcommand {\sskip} {\mathbf{skip}}
\newcommand {\then} {\ensuremath{\mathbf{then}}}
\newcommand {\eelse} {\ensuremath{\mathbf{else}}}
\newcommand {\while} {\ensuremath{\mathbf{while}}}
\newcommand {\ddo} {\ensuremath{\mathbf{do}}}
\newcommand {\pend} {\ensuremath{\mathbf{end}}}
\newcommand {\mymeas} {\mathbf{measure}}
\newcommand {\iif} {\mathbf{if}}
\def\pmstm{\iif\ \m[\bar{q}]\ \then\ S_1\ \eelse\ S_0\ \pend}
\def\pwstm{\while\ \m[\bar{q}]\ \ddo\ S\ \pend}
\newcommand\whilestm[1]{\while\ \m[\bar{q}]\ \ddo\ #1\ \pend}
\newcommand{\define}{\ensuremath{\triangleq}}
\newcommand{\id}{\mathcal{I}}
\def\qstate{\rho}
\def\qassert{\Theta}
\def\qassertp{\Psi}
\def\Exp{\mathrm{Exp}}
\def\supoprset{\mathbb{E}}
\def\leinf{\le_{\mathit{inf}}}
\def\qstatesh#1{\d(\mathcal{H}_{#1})}
\newcommand\prog{\mathit{Prog}}
\def\l{\mathcal{L}}
\def \Rm#1{\mbox{\rm #1}}
\def \lsem      {\raise1pt\hbox{\Rm {[\kern-.12em[}}}
\def \rsem      {\raise1pt\hbox{\Rm {]\kern-.12em]}}}
\def \sem#1{\mbox{\lsem$#1$\rsem}}
\newtheorem{theorem}{Theorem}[section]
\newtheorem{lemma}{Lemma}[section]
\newtheorem{definition}{Definition}[section]
\newtheorem{example}{Example}[section]
\title{
Verification of Nondeterministic Quantum Programs}
\author{Yuan Feng}
\email{Yuan.Feng@uts.edu.au}
\affiliation{
    \institution{University of Technology Sydney}
    \country{Australia}
}
\author{Yingte Xu}
\email{Yingte.Xu@student.uts.edu.au}
\affiliation{
    \institution{University of Technology Sydney}
    \country{Australia}
}
\keywords{Quantum programming, nondeterminism, program verification, Hoare logic}
\begin{abstract}
Nondeterministic choice is a useful program construct that provides a way to describe the behaviour of a program without specifying the details of possible implementations. It supports the stepwise refinement of programs, a method that has proven useful in software development. Nondeterminism has also been introduced in quantum programming, and termination of nondeterministic quantum programs has been extensively analysed. In this paper, we go beyond  termination analysis to investigate the verification of nondeterministic quantum programs where properties are given by sets of hermitian operators on the associated Hilbert space. Hoare-type logic systems for partial and total correctness are proposed which turn out to be both sound and relatively complete with respect to their corresponding semantic correctness. To show the utility of these proof systems, we analyse some quantum algorithms such as quantum error correction scheme, Deutsch algorithm, and a nondeterministic quantum walk. Finally, a proof assistant prototype is implemented to aid in the automated reasoning of nondeterministic quantum programs. 
\end{abstract}
\begin{document}

\maketitle

\thispagestyle{empty}

	\section{Introduction}
	
	The introduction of nondeterminism into a programming language allows one to describe both a specification and its possible implementations within the same language. It naturally supports the technique of stepwise refinement of programs, which has proven useful in the development of computer software~\cite{morgan1993refinement,back2012refinement,morris1987theoretical}. From the verification perspective, it provides a way to reason about the correctness of an abstract specification before it is fully implemented~\cite{dijkstra1976discipline}, making it possible to detect errors  in the earliest stages of software development.
			
	Programs with both nondeterministic and probabilistic choices have been investigated in~\cite{morgan1996probabilistic,jifeng1997probabilistic,McIver:2001,mciver2005abstraction}. Interestingly, as pointed out in~\cite{jifeng1997probabilistic}, there are two natural models to describe the way nondeterminism interacts with probabilistic choice.	This difference can be best illustrated by regarding the execution of a nondeterministic and probabilistic program as a game against a malicious adversary. In the relational model, the adversary makes his nondeterministic choice during runtime execution, taking advantage of all the probabilistic choices already made up to that point. By contrast, the adversary in the lifted model must make all the nondeterministic choices in compile-time, before any real execution of the program. It was shown in~\cite{jifeng1997probabilistic} that programs in the relational model enjoy many nice algebraic properties which are useful in static analysis and compiler design~\cite{hoare1993normal}. 

	Nondeterminism was also introduced into quantum programming many years ago.  Zuliani~\cite{zuliani2004non} extended the quantum Guarded-Command Language qGCL~\cite{sanders2000quantum} with a nondeterministic choice construct, for the purpose of modelling and reasoning about counterfactual computations~\cite{mitchison2001counterfactual} and quantum mixed-state systems. Termination analysis of nondeterministic quantum programs was first studied in~\cite{li2014termination}, where programs are described in a language-independent way: a nondeterministic quantum program simply consists of a set of deterministic quantum Markov chains (mathematically expressed as super-operators) with the same Hilbert space and a two-outcome measurement to determine whether the program has terminated. Later on, termination analysis was extended to a quantum programming language which involves both demonic and angelic nondeterminism, using the technique of  linear ranking super-martingale~\cite{li2017algorithmic}. 

	Despite the termination analysis in~\cite{li2014termination,li2017algorithmic}, the verification of nondeterministic quantum programs is rarely addressed in the literature. In this paper, we consider an extended while-language for quantum programs where a construct for binary (demonic) nondeterministic choice is included. The main contribution includes:
	\begin{itemize}
		\item A denotational semantics for nondeterministic quantum programs, which resembles the lifted model in the probabilistic setting. The rationale behind this design decision is an observation that the relational model cannot be satisfactorily defined in the quantum setting, partly due to the non-uniqueness of ensembles of pure states associated with a given density operator. 
		\item The partial and total correctness of nondeterministic quantum programs based on their denotational semantics. Compared with~\cite{li2014termination,li2017algorithmic}, our notions of correctness are much broader, capable of describing a wider range of properties beyond termination of quantum programs.	
		\item Hoare-type logic systems which are both sound and relatively complete with respect to partial/total correctness. Illustrating examples are explored and a prototyping tool is implemented which show the utility of these logic systems in proving the correctness of some quantum algorithms such as quantum error correction scheme, Deutsch algorithm, and a nondeterministic quantum walk.
\end{itemize}

%

	The remaining part of the paper is organised as follows. 
	In Sec.~\ref{sec:pre} we review some basic notions from quantum computing that will be used in later discussion. 
	The syntax and denotational semantics of  nondeterministic quantum programs considered in this paper are given in Sec.~\ref{sec:lang}. Sec.~\ref{sec:verification} is the main part of the paper where we elaborate the notions of assertions for quantum states, and the partial and total correctness of nondeterministic quantum programs. Proof systems for these two correctness notions are proposed and shown to be both sound and relatively complete.
	Illustrative examples are explored in Sec.~\ref{sec:case} to show the effectiveness of our proof systems. Sec.~\ref{sec:implementation} is devoted to a proof assistant prototype implementing the verification techniques developed in this paper. Finally, Sec.~\ref{sec:conclusion} concludes the paper and points out some topics for future studies.

\section{Background on quantum computing}
\label{sec:pre}

This section is devoted to fixing some notations from linear algebra and quantum mechanics that will be used in this paper. For a thorough introduction of relevant backgrounds, we refer to~\cite[Chapter 2]{nielsen2002quantum}.

Let $\h$ be a finite-dimensional Hilbert space. Following the tradition of quantum computing, vectors in $\h$ are denoted in the Dirac form $|\psi\>$. The inner and outer products of two vectors $|\psi\>$ and $|\phi\>$ are written as $\<\psi|\phi\>$ and $|\psi\>\<\phi|$ respectively. Let $\lh$ be the set of linear operators on $\h$. Denote by $\tr(A)= \sum_{i\in I} \<\psi_i|A|\psi_i\>$ the \emph{trace} of $A\in \lh$ where $\{|\psi_i\> : i\in I\}$ is an (arbitrary) orthonormal basis of $\h$.  The \emph{adjoint} of $A$, denoted $A^\dag$, is the unique linear operator in $\lh$ such that $\<\psi|A|\phi\> = \<\phi|A^\dag |\psi\>^*$ for all $|\psi\>, |\phi\>\in \h$. Here, for a complex number $z$, $z^*$ denotes its conjugate. An operator $A\in \lh$ is said to be (1) \emph{hermitian} if $A^\dag = A$; (2) \emph{unitary} if $A^\dag A = I_\h$, the identity operator on $\h$; (3) \emph{positive} if for all $|\psi\>\in \h$, $\<\psi|A|\psi\>\geq 0$. 
Every hermitian operator $A$ has a \emph{spectral decomposition} form $A  = \sum_{i\in I} \lambda_i |\psi_i\>\<\psi_i|$ where $\{|\psi_i\> : i\in I\}$ constitute an orthonormal basis of $\h$. 
The L\"owner (partial) order $\le$ on $\lh$ is defined by letting $A\le B$ iff $B-A$ is positive. 

Let $\h_1$ and $\h_2$ be two finite dimensional Hilbert spaces, and $\h_1\otimes \h_2$ their tensor product. 
Let $A_i\in \l(\h_i)$. The tensor product of $A_1$ and $A_2$, denoted $A_1\otimes A_2$ is a linear operator in $\l(\h_1\otimes \h_2)$ such that
$(A_1\otimes A_2)(|\psi_1\>\otimes |\psi_2\>) = (A_1|\psi_1\>)\otimes (A_2|\psi_2\>)$ for all $|\psi_i\> \in \h_i$. The definition extends linearly to $\l(\h_1\otimes \h_2)$. To simplify notations, we often write $|\psi_1\> |\psi_2\>$ for $|\psi_1\>\otimes |\psi_2\>$.

A linear operator $\e$ from $\l(\h_1)$ to $\l(\h_2)$ is called a \emph{super-operator}.  It is said to be (1) \emph{positive} if it maps positive operators on $\h_1$ to positive operators on $\h_2$; (2) \emph{completely positive} if $\mathcal{I}_\h\otimes \e$ is positive for all finite dimensional Hilbert space $\h$, where $\mathcal{I}_\h$ is the identity super-operator on $\lh$; (3) \emph{trace-preserving} (resp. \emph{trace-nonincreasing}) if 
$\tr(\e(A)) = \tr(A)$ (resp. $\tr(\e(A)) \leq \tr(A)$ for any positive operator $A\in \l(\h_1)$.

From \emph{Kraus representation theorem}~\cite{kraus1983states}, a super-operator $\e$  from $\l(\h_1)$ to $\l(\h_2)$ is completely positive iff there are linear operators $\{E_i : i\in I\}$, called \emph{Kraus operators} of $\e$, from $\h_1$ to $\h_2$ such that $\e(A) = \sum_{i\in I} E_i A E_i^\dag$ for all $A\in \l(\h_1)$. Furthermore, $\e$ is trace-preserving (resp. {trace-nonincreasing}) iff $\sum_{i\in I} E_i^\dag E_i = I_{\h_1}$ (resp. $\sum_{i\in I} E_i^\dag E_i \le I_{\h_1}$). The \emph{adjoint} of $\e$, denoted $\e^\dag$, is a completely positive super-operator from $\l(\h_2)$ back to $\l(\h_1)$ with Kraus operators being $\{E_i^\dag : i\in I\}$. It is easy to check that for any $A\in \l(\h_1)$ and $B\in \l(\h_2)$, $\tr(\e(A)\cdot B) = \tr(A\cdot \e^\dag(B))$.

In the following, for simplicity, all super-operators are assumed to be completely positive and trace-nonincreasing unless otherwise stated. Given the tensor product space $\h_1\otimes \h_2$, the \emph{partial trace} with respect to $\h_2$, denoted $\tr_{\h_2}$, is a super-operator from $\l(\h_1\otimes \h_2)$ to $\l(\h_1)$ such that for any $|\psi_i\>, |\phi_i\> \in \h_i$, $i=1,2$,
$$\tr_{\h_2}(|\psi_1\>\<\phi_1|\otimes |\psi_2\>\<\phi_2|) = 
\<\phi_2|\psi_2\> |\psi_1\>\<\phi_1|.$$
Again, the definition extends linearly to $\l(\h_1\otimes \h_2)$.

According to von Neumann's formalism of quantum mechanics
\cite{vN55}, any quantum system with finite degrees of freedom is associated with a finite-dimensional Hilbert space $\h$ called its \emph{state space}. When the dimension of $\h$ is $2$, such a system is called a \emph{qubit}, the analogy of bit in classical computing. A {\it pure state} of the system is described by a normalised vector in $\h$. When the system is in state $|\psi_i\>$ with probability $p_i$, $i\in I$, it is said to be in a \emph{mixed} state, represented by the \emph{density operator} $\sum_{i\in I} p_i|\psi_i\>\<\psi_i|$ on $\h$. Obviously, a density operator is positive and has trace 1. 
In this paper, we follow Selinger's convention~\cite{selinger2004towards} to  regard \emph{partial density operators}, i.e. positive operators with traces not greater than 1 as (unnormalised) quantum states. Intuitively, a partial density operator $\rho$ denotes a legitimate quantum state $\rho/\tr(\rho)$ which is obtained with probability $\tr(\rho)$. Denote by $\dh$ the set of partial density operators on $\h$. The state space of a composite system (e.g., a quantum system consisting of multiple qubits) is the tensor product of the state spaces
of its components. For any $\rho$ in $\d(\h_1 \otimes \h_2)$, the partial traces
$\tr_{\h_1}(\rho)$ and $\tr_{\h_2}(\rho)$ are
the reduced quantum states of $\rho$ on $\h_2$ and $\h_1$, respectively. 

The \emph{evolution} of a closed quantum system is described by a unitary
operator on its state space: if the states of the system at 
$t_1$ and $t_2$ are $\rho_1$ and $\rho_2$, respectively, then
$\rho_2=U\rho_1U^{\dag}$ for some unitary $U$. Typical unitary operators used throughout this paper include Pauli-$X$, $Y$, and $Z$, Hadamard $H$, and controlled-NOT (CNOT) operator $CX$, represented in the matrix form (with respect to the computational basis) respectively as follows:
\[
X\define \begin{bmatrix}
	0 & 1 \\
	1 & 0 
\end{bmatrix},\ Y\define \begin{bmatrix}
	0 & -i \\
	i & 0 
\end{bmatrix},\ Z\define \begin{bmatrix}
	1 & 0 \\
	0 & -1
\end{bmatrix},
\]
and
\[H\define \frac{1}{\sqrt{2}} \begin{bmatrix}
	1 & 1 \\
	1 & -1
\end{bmatrix},\ CX\define \begin{bmatrix}
	1 & 0 & 0 & 0 \\
	0 & 1 & 0 & 0 \\
	0 & 0 & 0 & 1 \\
	0 & 0 & 1 & 0
\end{bmatrix}.\]
A (projective) quantum {\it measurement} $\m$ is described by a
collection $\{P_i : i\in O\}$ of projectors (hermitian operators with eigenvalues being either 0 or 1) in the state space $\h$, where $O$ is the set of measurement outcomes. It is required that the
measurement operators $P_i$'s satisfy the completeness equation
$\sum_{i\in O}P_i = I_\h$. If the system was in state $\rho$ before measurement, then the probability of observing outcome $i$ is given by
$p_i=\tr(P_i\rho),$ and the state of the post-measurement system
becomes $\rho_i = P_i\rho P_i/p_i$ whenever $p_i>0$. Sometimes we use 
a hermitian operator $M$ in $\lh$ called \emph{observable} to represent a  projective measurement. To be specific, let 
\[
M=\sum_{m\in \mathit{spec}(M)}mP_m
\] 
where $\mathit{spec}(M)$ is the set of eigenvalues of $M$, and $P_m$ the projector onto the eigenspace associated with $m$. Then the projective measurement determined by $M$ is $\{P_m : m\in \mathit{spec}(M)\}$. Note that by the linearity of the trace function,
the expected value of outcomes when $M$ is measured on state $\rho$ is calculated as
\[
\sum_{m\in \mathit{spec}(M)} m\cdot  \tr(P_m \rho) = \tr(M\rho).
\]

Finally, the dynamics that can occur in a (not necessarily closed) physical system are described by a trace-preserving super-operator. Typical examples include the unitary transformation $\e_U(\rho)\define U\rho U^\dag$ and the state transformation caused by a measurement, when all the post-measurement states are taken into account. More specifically, the evolution
\[
\e_\m(\rho) \define \sum_{i\in O} p_i \rho_i = \sum_{i\in O} P_i \rho P_i 
\]
is a super-operator. 

\textbf{Notation conventions.}  To simplify notations, we assume the following conventions throughout the paper. 
\begin{itemize}
	\item For a pure state $|\psi\>$ in a finite dimensional Hilbert space $\h$, we denote by $[|\psi\>]$ its corresponding (rank-1) density operator; that is, $[|\psi\>] \define |\psi\>\<\psi|$. 
	\item We use subscripts to indicate the quantum systems on which a state, an operator, or a super-operator is acting. For example, $|\psi\>_q$ or $[|\psi\>_q]$ denotes a pure state $|\psi\>$ of qubit $q$, and $CX_{q_1, q_2}$ denotes the CNOT operator with $q_1$ being its control qubit and $q_2$ the target qubit. Here $CX|x\>|y\> = |x\>|x\oplus y\>$ for any $x,y\in \{0,1\}$ and $\oplus$ denotes exclusive-or. Furthermore, for a $d$-dimensional Hilbert space, denote by $\{|i\> : 0\leq i\leq d-1\}$ its standard (or computational) basis.  
	\item Any super-operator is assumed to be identical to its cylinder extensions (the tensor product of it with the identity super-operator on the remaining subsystems) on larger Hilbert spaces. In other words, $\e$ from $\l(\h_1)$ to $\l(\h_2)$ can be regarded as from $\l(\h\otimes \h_1)$ to $\l(\h\otimes \h_2)$ for any $\h$ by identifying $\e$ and $\id_\h\otimes\e$. In particular, any value $p\in [0,1]$
	(a super-operator on the 0-dimensional space) can be regarded as $p\cdot \id_V$ on $\d(\h_{V})$ for any finite set $V$ of qubits.
	\item Operations on individual elements are assumed to be extended to sets in an element-wise way. For example, let $\supoprset$ and $\mathbb{F}$ be two sets of super-operators. Then $$\supoprset \circ \e + \mathbb{F}\circ \f\define \{\e' \circ \e+ \f' \circ \f : \e'\in \supoprset, \f'\in \mathbb{F}\}$$
	where $\circ$ denotes the composition of super-operators; that is,
	$\e\circ \f(\rho) = \e(\f(\rho))$ for all $\rho$. 
\end{itemize} 
	
	\section{Nondeterministic quantum programs}\label{sec:lang}
	
	We extend the purely quantum while-language defined in~\cite{feng2007proof,ying2012floyd} to describe quantum programs with nondeterministic choices.  Let $\QVar$, ranged over by $q, r, \cdots$, be a finite set of (qubit-type) quantum variables. For any
	 subset $V$ of $\QVar$, let
	\[\h_V \define \bigotimes_{q\in V} \h_{q},
	\]
	where $\h_{q} \simeq \h_2$ is the 2-dimensional Hilbert space associated with $q$. As we use subscripts to distinguish Hilbert spaces associated with different quantum variables, their order in the tensor product is irrelevant. 
	
	\subsection{Syntax}
	A nondeterministic quantum program is defined by the following rules:
	\begin{align*} S::= &\ \sskip\ |\ \abort\ |\ \bar{q}:=0\ |\
		\bar{q}\apply U\ |\ S_0;S_1\ |\ S_0\ \square\ S_1\ |\\
		&\ \pmstm \ |\ \pwstm
	\end{align*}
	where $S,S_0$ and $S_1$ are nondeterministic quantum programs, $\bar{q} \define q_1, \ldots, q_n$ an (ordered) tuple of distinct qubit-type variables, and $U$ a unitary operator and $\m=\{P_0, P_1\}$ a two-outcome projective measurement on
	$\h_{\bar{q}}$ respectively.
	Sometimes we also use $\bar{q}$ to denote the (unordered) set $\{q_1,q_2,\dots,q_n\}$. Let $\qVar(S)$ be the set of quantum variables in $S$.
	
	The program constructs are standard, and their meaning will be clear when the denotational semantics is given later. Intuitively, $\sskip$ is a no-op statement, $\abort$  halts the computation with no proper state reached,
	$\bar{q}:=0$ initialises each qubit in system $\bar{q}$ into $|0\>$, $\bar{q}\apply U$ applies the unitary operator $U$ on system $\bar{q}$, $S_0;S_1$ is the sequential composition of $S_0$ and $S_1$, and $S_0\ \square\ S_1$ chooses $S_0$ or $S_1$ to execute non-deterministically. The conditional statement $\pmstm$ and the while statement $\pwstm$ behave similarly to their classical counterparts, except that they use the outcome of measuring $\m$ on $\bar{q}$ to determine subsequent operations. These quantum measurements often change the state of the system they are applied on. This is in contrast with classical programs, where such side-effects do not exist. 
	
			\begin{figure}[t]\centering
		\tikzset{
			my label/.append style={above right,xshift=0.3cm}
		}
		\begin{quantikz}[row sep=0.3cm,column sep=0.6cm]
			\lstick{$q\!:\  |\psi\>$} &\ctrl{1} & \ctrl{2}  &  
			\gate[3,style={starburst,line
				width=1pt,inner xsep=-4pt,inner ysep=-5pt},
			label style=black]{\text{noise}} 
			& \ctrl{2} & \ctrl{1} & \gate{X} & \qw\\	  
			\lstick{$q_1\!:\  |0\>$} &\targ{}  &   \qw &   &\qw & \targ{} & \meter{} \vcw{-1}& \\
			\lstick{$q_2\!:\  |0\>$}  &\qw &\targ{}& &    \targ{}&\qw  &\meter{} \vcw{-1}& 
		\end{quantikz}
		\caption{Quantum error correction scheme which corrects a bit-flip error on any of the three qubits.}\label{fig-teleportation}
	\end{figure}
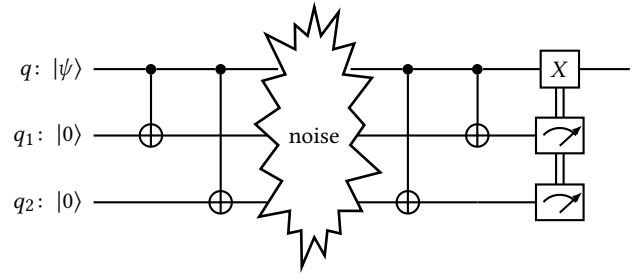
	\begin{example}[Quantum Error Correction Scheme as a Nondeterministic Program]\label{exm:errcor}
		Error correction codes are widely used to protect quantum information from noise in quantum communication channels or during quantum computation. The simplest error correction code, called three-qubit bit-flip code, encodes each qubit state $\alpha_0|0\> + \alpha_1|1\>$ into a three-qubit state $\alpha_0|000\> + \alpha_1|111\>$ (see the left part of Fig.~\ref{fig-teleportation} for a circuit implementation of the encoding process). All the three qubits then pass through a noisy quantum channel, which flips the qubit (that is, applies a Pauli-$X$ operator on it which turns $|0\>$ into $|1\>$ and $|1\>$ into $|0\>$ simultaneously) with some small probability.
		For simplicity, we assume that at most one of the three qubits is flipped, but we do not know which one it is. Interestingly, by applying a properly designed error correction procedure (shown in the right part of Fig.~\ref{fig-teleportation}), the error can be detected and corrected perfectly.     

If we model the unknown noise with a nondeterministic choice,
the quantum error correction scheme, including the processes of encoding, error introduction, and decoding, can be written as a nondeterministic quantum program as follows:
\begin{align*}
	\mathit{ErrCorr}\define &\\
	& q_1, q_2:=0;\\
	& q,q_1\apply \textit{CX};\ q,q_2\apply \textit{CX};\\
	& \sskip\ \square\ q\apply X\ \square\ q_1\apply X\ \square\ q_2\apply X;\\
	& q,q_2\apply \textit{CX};\ q,q_1\apply \textit{CX};\\
	& \iif\ \m[q_2]\ \then\\
	& \quad \iif\ \m[q_1]\ \then\\
	&\qquad q\apply X\\
	&\quad \pend\\ 
	&\pend
\end{align*}	
Here $CX$ is the CNOT gate, $\m$ is the measurement according to the computational basis $\{|0\>, |1\>\}$, and the command $\iif\ \m[r]\ \then\ S\ \pend$ denotes the abbreviation for $\iif\ \m[r]\ \then\ S\ \eelse\ \sskip\ \pend$. The nondeterministic statement models four different possibilities of error occurring: no error, bit-flip error on the first, second, and third qubit, respectively. Similar to the sequential composition, we assume (and will justify after the formal semantics is given) that the nondeterministic choice is both left- and right-associative.
\end{example}

	\subsection{Denotational Semantics}\label{sec:denotation}
	
	This subsection is devoted to a denotational semantics for nondeterministic quantum programs which is obtained by lifting the semantics of deterministic programs presented in~\cite{feng2007proof,ying2012floyd}. 
	
	\begin{figure}[t]
		\begin{align*}
			\sem{\sskip} & = \left\{1\right\}\\ 
			\sem{\abort} &=\left\{0\right\}\\
			\sem{\bar{q}:=0} &= \left\{\mathit{Set}^0_{\bar{q}}\right\} \\ 
			\sem{\bar{q}\apply U} &= \left\{\u_{\bar{q}}\right\}\\
			\sem{S_0;S_1} &= \sem{S_1}\circ \sem{S_0}\\ 
			\sem{S_0\ \square\ S_1} &= \sem{S_0} \cup \sem{S_1}\\
			\sem{\pmstm} &= \sem{S_0}\circ \p_{\bar{q}}^0 +  \sem{S_1}\circ \p_{\bar{q}}^1\\ \sem{\pwstm} &=
		\end{align*}
		\[
		\left\{\sum_{i= 0}^\infty \p^0_{\bar{q}} \circ \eta_{i}\circ \p^1_{\bar{q}}\circ\ldots  \eta_1\circ \p^1_{\bar{q}} : \eta\in \sem{S}^{\N}\right\}
		\]
		\caption{Denotational semantics for nondeterministic quantum programs.}
		\label{fig:densemantics}
	\end{figure}
	

	Given a finite dimensional Hilbert space $\h$, let $\s(\h)$ be the set of super-operators on $\h$. Note that $\s(\h)$ is a complete partial order (CPO)  with respect to the order $\preceq$ defined as follows: $\e \preceq \f$ iff there exists a super-operator $\g$ such that $\f = \e + \g$. The following lemma shows that this partial order coincides with the one defined by lifting pointwise the L\"{o}wner order over density operators on extended Hilbert spaces.
	\begin{lemma}
		For any $\e,\f\in \s(\h)$,  $\e \preceq \f$ iff for any auxiliary Hilbert space $\h'$ and any $\rho\in \d(\h\otimes \h')$,
		\[\e(\rho) \le \f(\rho).\]
	\end{lemma}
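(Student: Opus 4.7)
The plan is to prove the two directions separately, with the forward implication being essentially an unpacking of definitions and the backward implication requiring a bit more care in constructing the witness super-operator.

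For the forward direction, I would assume $\e \preceq \f$, so by definition there exists a super-operator $\g$ (i.e.\ completely positive and trace-nonincreasing) with $\f = \e + \g$. By the paper's convention on cylinder extensions, this identity lifts to $\l(\h \otimes \h')$ for any auxiliary $\h'$, with $\g$ acting as $\id_{\h'} \otimes \g$. Since $\g$ is completely positive, this cylinder extension is a positive map, so for every $\rho \in \d(\h \otimes \h')$ we have $\g(\rho) \ge 0$, yielding $\e(\rho) \le \f(\rho)$.

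For the backward direction, I would define the candidate witness $\g \define \f - \e$ as a linear map on $\l(\h)$, and then verify the two required properties. Linearity is immediate from linearity of $\e$ and $\f$. To establish complete positivity, I need to show that $\id_{\h'} \otimes \g$ maps every positive operator on $\h' \otimes \h$ to a positive operator, for every $\h'$. Given an arbitrary positive $A \in \l(\h' \otimes \h)$, if $\tr(A) = 0$ then $A = 0$ and the conclusion is trivial; otherwise $A/\max(1,\tr(A))$ is a partial density operator, so the hypothesis applied to it (combined with linearity and nonnegative scaling) gives $(\id_{\h'} \otimes \g)(A) \ge 0$. For trace-nonincreasingness, for any positive $A \in \l(\h)$ we compute
\[
\tr(\g(A)) = \tr(\f(A)) - \tr(\e(A)) \le \tr(A) - 0 = \tr(A),
\]
using that $\f$ is trace-nonincreasing and $\e$ is positive. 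Hence $\g$ is a super-operator in the sense of the paper, and $\f = \e + \g$ witnesses $\e \preceq \f$.

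The main subtlety is in the backward direction: one must be careful that the hypothesis is formulated only for partial density operators rather than arbitrary positive operators, and that the standard characterization of complete positivity (via positivity of all cylinder extensions $\id_{\h'} \otimes \g$) requires positivity on \emph{all} positive operators. The fix is the rescaling argument above, which works precisely because $\id_{\h'} \otimes \g$ is linear and because positive operators are nonnegative multiples of partial density operators. Everything else is a routine application of the cylinder-extension convention and the definition of $\preceq$.
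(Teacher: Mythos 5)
Your proof is correct and follows essentially the same route as the paper, which simply asserts that $\g \define \f - \e$ is linear, completely positive, and trace-nonincreasing, hence a super-operator witnessing $\e \preceq \f$. Your rescaling argument (reducing positivity on arbitrary positive operators to the hypothesis on partial density operators) is a legitimate and careful filling-in of a detail the paper leaves implicit.
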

	\begin{proof}
		The necessity part is easy. For the sufficiency part, note that the mapping $\g\define\f-\e$ is linear, trace non-increasing, and completely positive. Thus it is also a super-operator. 
	\end{proof}
	
	Let $\prog$ be the set of all nondeterministic quantum programs.
	For any $S\in \prog$, the \emph{denotational semantics} $\sem{S}$ of $S$ is a set of super-operators in $\s(\h_{\qVar(S)})$ defined inductively in Fig.~\ref{fig:densemantics}, where $\m = \{P^0, P^1\}$, $\p^i_{\bar{q}}$, $i=0,1$, is a super-operator such that $\p^i_{\bar{q}}(\rho) = P^i_{\bar{q}} \rho P^i_{\bar{q}}$, and $\mathit{Set}^0_{\bar{q}}$ and $\u_{\bar{q}}$ with $|\bar{q}|=n$ are super-operators such that $\mathit{Set}^0_{\bar{q}}(\rho) = \sum_{i=0}^{2^n-1}\quzi \rho\quiz$ and $\u_{\bar{q}}(\rho) = U_{\bar{q}} \rho U_{\bar{q}}^\dag$ for all $\rho\in \dhv$. Recall that the semantics of a deterministic quantum program is a single super-operator. We borrow the idea widely used in classical programming theory to employ sets of super-operators to represent nondeterminism in the semantics of nondeterministic quantum programs.
	
	A more elegant way to define the denotational semantics of nondeterministic quantum programs is to consider the power domain whose elements are subsets of super-operators with certain closure properties. The semantics of a while loop, say, is then simply the least fixed point of some Scott-continuous function over this power domain. However, since the main goal of this paper is to develop verification techniques for nondeterministic quantum programs, we decided to adopt the current definition, which is more intuitive and accessible for ordinary readers without a background in domain theory.  
	
	Note that with our notational convention, $\sem{S}$ can also be regarded as a subset of $\s(\h_{V})$ for any $V\supseteq \qVar(S)$. For example, the number $1$ in $\sem{\sskip}$ can be regarded as the identity super-operator $\id_{\h_V}$ for any $V\subseteq \QVar$. Similarly, the number 0 in $\sem{\abort}$ can represent the zero super-operator on any qubit system. Furthermore, note that $0(\rho)$ is the zero density operator for any input $\rho$. This accurately captures the intuition that $\abort$ applied to any input cannot produce any valid quantum state.
		 
	As the four basic commands $\sskip$, $\abort$, $\bar{q} := 0$, $\bar{q}\apply U$ are all deterministic, their semantic sets contain only a single super-operator representing the corresponding quantum operation. Recall also that we use the subscript $\bar{q}$ to denote the qubit system on which the quantum operation is applied. The semantics of $S_0;S_1$ is defined as the (element-wise) composition of $\sem{S_1}$ and $\sem{S_0}$. This follows the lifted model proposed in~\cite{jifeng1997probabilistic} for nondeterministic probabilistic programs.
	The reason why we adopt this lifted model (rather than the relational one) will be elaborated in Sec.~\ref{sec:comp}. 
				
	For $\sem{\pmstm}$, we have to combine the semantics of $S_0$ and $S_1$ using the corresponding measurement super-operators. Specifically, for any input state $\rho$, if the measurement returns 0, then the post-measurement state becomes $\p_{\bar{q}}^0(\rho)$, and one of the super-operators in $\sem{S_0}$ will be applied on this state. The case where the measurement result is 1 is similar. Finally, the (partial) density operators obtained from the two branches must be added together to form the output state of the entire statement, since the two branches are chosen probabilistically (due to the measurement), rather than non-deterministically.

	Finally, in $\sem{\pwstm}$, since the loop body $S$ may contain nondeterministic choices, we use a \emph{scheduler} $\eta$ to specify which super-operator in $\sem{S}$ is taken in each iteration of the while loop. To simplify the notation, we write $\eta_i$ for $\eta(i)$, the super-operator taken in the $i$-th iteration. For each $n\geq 0$, the super-operator
	\begin{equation}\label{eq:defF}
		\f_{n}^\eta \define \sum_{i= 0}^n \p^0_{\bar{q}} \circ \eta_{i}\circ \p^1_{\bar{q}}\circ\ldots  \eta_1\circ \p^1_{\bar{q}}
	\end{equation}
	is obtained from the first $n$ executions of the loop body $S$ under the scheduler $\eta$. It is evident that the sequence 
	$\f_{n}^\eta$, $n\geq 0$, is non-decreasing under $\preceq$. Thus the least upper bound $\bigvee_{n\geq 0} \f_{n}^\eta$ is well-defined, and it is the semantics of $\pwstm$ when $\eta$ is used to resolve the nondeterminism in $S$.

	\begin{example}\label{exa:dserrcor}
		We revisit the quantum error correction scheme presented in Example~\ref{exm:errcor}. First, it is easy to see that
		\begin{align*}
		&\sem{\mathit{ErrCorr}} =  \left\{\left(\x_q\circ \p_{q_1,q_2}^{11} + \p_{q_1,q_2}^{\neq 11}\right)\circ \mathcal{CX}_{q,q_1}\circ \mathcal{CX}_{q,q_2}\circ \u\right.\\
		&\ \ \left.\circ\ \mathcal{CX}_{q,q_2}\circ \mathcal{CX}_{q,q_1}\circ \mathit{Set}^0_{q_1,q_2} : \u \in \left\{\id, \x_q, \x_{q_1}, \x_{q_2}\right\} \right\}
		\end{align*}
	where $\mathcal{X}$ and $\mathcal{CX}$ are the super-operators corresponding to the unitary operators $X$ and $CX$, respectively. Furthermore, $$\p^{11}(\rho) = |11\>\<11|\rho|11\>\<11|$$ and $$\p^{\neq 11}(\rho) = \sum_{(i,j)\neq (1,1)}|ij\>\<ij|\rho |ij\>\<ij|$$ for all $\rho\in \d(\h_2 \otimes \h_2)$. That is, $\p^{11}_{q_1,q_2}$ denotes the super-operator corresponding to the case where the measurements $\m[q_2]$ and $\m[q_1]$ both obtain 1, while $\p^{\neq 11}_{q_1,q_2}$ represents the other cases. 
		
	From any input state $|\psi\>_q|\psi'\>_{q_1,q_2}$ where $|\psi\>=\alpha_0|0\> + \alpha_1|1\>$ and $|\psi'\> \in \h_2 \otimes \h_2$, 
	although there are four different super-operators in the denotational semantics $\sem{\mathit{ErrCorr}}$, when applied on $|\psi\>_q|\psi'\>_{q_1,q_2}$, the final quantum states $|\psi\>_q|ij\>_{q_1,q_2}$, $i,j=0,1$, have the same reduced state $|\psi\>$ on qubit $q$, which is exactly the input state on $q$. From this observation, we conclude that the error correction scheme has successfully corrected the potential bit-flip error occurred in any of the three qubits. 
	\end{example}

	The next lemma gives a recursive description of the denotational semantics of while loops. For any scheduler $\eta\in \sem{S}^\N$, we define $\eta^{\ra}\in \sem{S}^\N$ to be the suffix of $\eta$ starting from $\eta_2$; that is, $\eta^{\ra}_i = \eta_{i+1}$ for all $i\geq 1$.
	\begin{lemma}\label{lem:whilesem}
		Let $\while \define \pwstm$. Then
		for any $\eta\in  \sem{S}^\N$ and $n\geq 0$,
		\begin{equation}\label{eq:ra}
			\f_{n+1}^\eta  =  \p^0_{\bar{q}} + \f_n^{\eta^\ra} \circ \eta_1\circ \p^1_{\bar{q}}.
		\end{equation}
		where $\f_{n}^\eta$ is defined as in Eq.~\eqref{eq:defF}.
		Consequently,	
		\begin{equation}\label{eq:whilesem}
			\sem{\while} = \p^0_{\bar{q}} + \sem{\while} \circ \sem{S}\circ \p^1_{\bar{q}}.
		\end{equation}	
	\end{lemma}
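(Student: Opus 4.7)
The plan is to prove the two equations in sequence, with the first being a direct algebraic rearrangement and the second following by taking limits and exploiting the set-valued structure of $\sem{\while}$.

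First I would establish Eq.~\eqref{eq:ra} by unfolding the definition. Let $T_i^\eta \define \p^0_{\bar{q}} \circ \eta_i \circ \p^1_{\bar{q}} \circ \cdots \circ \eta_1 \circ \p^1_{\bar{q}}$ for $i\geq 1$ and $T_0^\eta \define \p^0_{\bar{q}}$, so that $\f_n^\eta = \sum_{i=0}^n T_i^\eta$. A direct reindexing using $\eta^\ra_i = \eta_{i+1}$ yields
\[
T_i^{\eta^\ra}\circ \eta_1\circ \p^1_{\bar{q}} \;=\; \p^0_{\bar{q}}\circ \eta_{i+1}\circ \p^1_{\bar{q}}\circ\cdots\circ \eta_2\circ \p^1_{\bar{q}}\circ \eta_1\circ \p^1_{\bar{q}} \;=\; T_{i+1}^\eta.
\]
Summing over $i=0,\dots,n$ and adding $T_0^\eta = \p^0_{\bar{q}}$ gives precisely $\sum_{j=0}^{n+1} T_j^\eta = \f_{n+1}^\eta$, which is Eq.~\eqref{eq:ra}.

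For Eq.~\eqref{eq:whilesem} I would prove the two inclusions separately. For $(\subseteq)$, take $\bigvee_{n\geq 0}\f_n^\eta\in \sem{\while}$; since shifting the index does not change the supremum, $\bigvee_{n\geq 0}\f_n^\eta = \bigvee_{n\geq 0}\f_{n+1}^\eta$. Applying Eq.~\eqref{eq:ra} termwise and using that $\p^0_{\bar{q}}$ is constant together with the Scott-continuity of the operations $\g \mapsto \g \circ \eta_1\circ \p^1_{\bar{q}}$ and $\g\mapsto \p^0_{\bar{q}} + \g$ on the CPO $(\s(\h),\preceq)$, I can pull the supremum through to obtain
\[
\bigvee_{n\geq 0}\f_{n+1}^\eta \;=\; \p^0_{\bar{q}} + \Bigl(\bigvee_{n\geq 0}\f_n^{\eta^\ra}\Bigr)\circ \eta_1\circ \p^1_{\bar{q}},
\]
which lies in $\p^0_{\bar{q}} + \sem{\while}\circ \sem{S}\circ \p^1_{\bar{q}}$ since $\bigvee_{n}\f_n^{\eta^\ra} \in \sem{\while}$ and $\eta_1\in\sem{S}$. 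For $(\supseteq)$, take an element $\p^0_{\bar{q}} + \g\circ \e\circ \p^1_{\bar{q}}$ with $\g = \bigvee_{n\geq 0}\f_n^{\eta'}$ and $\e\in\sem{S}$. Define $\eta\in\sem{S}^{\N}$ by $\eta_1 = \e$ and $\eta_{i+1} = \eta'_i$ for $i\geq 1$, so that $\eta^\ra = \eta'$. Reversing the computation just performed gives $\p^0_{\bar{q}} + \g\circ \e\circ \p^1_{\bar{q}} = \bigvee_{n\geq 0}\f_{n+1}^\eta = \bigvee_{n\geq 0}\f_n^\eta$, which is in $\sem{\while}$.

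The only subtle step is the continuity argument used to pass the supremum through composition and addition; I expect this to be the main obstacle, but it follows routinely from the fact that super-operator composition and sum distribute over pointwise increasing chains under $\preceq$, which in turn follows from the linearity and L\"owner-order preservation noted in the preceding lemma. Everything else is bookkeeping on the index $i$.
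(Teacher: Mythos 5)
Your proof is correct and follows essentially the same route as the paper: Eq.~\eqref{eq:ra} by unfolding the definition (the paper phrases this as an induction on $n$, but your direct reindexing $T_i^{\eta^\ra}\circ\eta_1\circ\p^1_{\bar{q}} = T_{i+1}^\eta$ is the same computation unrolled), the inclusion $\subseteq$ by passing suprema through the maps $\g\mapsto \p^0_{\bar{q}}+\g\circ\eta_1\circ\p^1_{\bar{q}}$, and $\supseteq$ via exactly the paper's scheduler extension $\eta_1=\e$, $\eta_{i+1}=\eta'_i$. Your explicit continuity justification (suprema of $\preceq$-chains commute with right-composition and with addition of a constant, which holds pointwise) is a step the paper leaves implicit, and your version also quietly fixes a small index slip in the paper's stated identity $\f_n^{\eta'} = \p^0_{\bar{q}} + \f_n^{\eta}\circ\e\circ\p^1_{\bar{q}}$, whose correct form is the $n+1$-indexed instance of Eq.~\eqref{eq:ra}.
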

	\begin{proof}
		Eq.~\eqref{eq:ra} can be easily proved by induction on $n$, and then the
		`$\subseteq$' part of Eq.~\eqref{eq:whilesem} follows.
		For the `$\supseteq$' part, let $\g\in \p^0_{\bar{q}} + \sem{\while} \circ \sem{S}\circ \p^1_{\bar{q}}$. Then there exists $\eta\in \sem{S}^\N$ and $\e\in \sem{S}$ such that 
		$
		\g = \p^0_{\bar{q}} + \bigvee_{n\geq 0} \f_n^{\eta} \circ \e\circ \p^1_{\bar{q}}.
		$
		Let $\eta'\in \sem{S}^\N$ with $\eta'_1 = \e$ and $\eta'_{i+1} = \eta_i$ for all $i\geq 1$. Then it is easy to prove by induction that for any $n\geq 1$, 
		\[
		\f_{n}^{\eta'} = \p^0_{\bar{q}} + \f_n^{\eta} \circ \e\circ \p^1_{\bar{q}}.
		\]
		Thus $\g = \bigvee_{n\geq 0} \f_n^{\eta'}$ is in $\sem{\while}$ as well.
	\end{proof}

	\subsection{Different Approaches for Semantics of Quantum Programs}
	\label{sec:comp}
	
	This subsection is devoted to an explanation of the design decisions we made in defining the semantics of nondeterministic quantum programs.
	
	\subsubsection{Pure-State v.s. Mixed-State Semantics}
	Note that there are two different approaches in defining semantics of deterministic quantum programs in the literature. One is based on pure states~\cite{van2004lambda,selinger2006lambda,sanders2000quantum,deng2015coinduction,pagani2014applying}, in which the meaning of a program is defined assuming that it is applied on pure states. The semantics is then extended to mixed states using, say, spectral decomposition. To be specific, if $\rho = \sum_{i\in I}p_i |\psi_i\>\<\psi_i|$ and a program $S$ maps $|\psi_i\>$ to $|\psi_i'\>$, then the final (mixed) state of executing $S$ on $\rho$ is  $\sum_{i\in I}p_i |\psi_i'\>\<\psi_i'|$. In contrast, the other approach is based on mixed states ~\cite{selinger2004towards,d2006quantum,feng2007proof,ying2012floyd}. That is, the semantics of a program is defined directly on mixed states without extension.

	The choice of pure-state v.s. mixed-state semantics does not make any difference when the programs considered are deterministic. The reason is that the semantics of a deterministic quantum program is a super-operator which is linear with respect to convex combination of density operators. Consequently, the two approaches indeed obtain the same semantics for deterministic quantum programs when applied on mixed states.	 
	However, this is not true for nondeterministic quantum programs, since now a program often corresponds to a set of super-operators, instead of a single one. As a result, convex combination of pure-state semantics does not give the same result as the mixed-state semantics. To make this point clear, let us see a simple example.
	
	\begin{example}\label{exm:pmsemantics}
		Let $S$ be a nondeterministic quantum program defined as follows:
		\begin{align*}
			S & \define \sskip\ \square\ q\apply X
		\end{align*}
		Then we have from Fig.~\ref{fig:densemantics} that $\sem{S} = \left\{1, \x_q\right\}$. Consequently,
		\begin{align}
			\sem{S}([|0\>]_q) & = \left\{[|0\>]_q, [|1\>]_q\right\} \notag\\
			\sem{S}([|1\>]_q) & = \left\{[|0\>]_q, [|1\>]_q\right\}  \label{eq:pmsemantics}\\
			\sem{S}([|+\>]_q) & = \left\{[|+\>]_q\right\} \notag\\
			\sem{S}([|-\>]_q) & = \left\{[|-\>]_q\right\} \notag
		\end{align}	   
		where $|\pm\> \define \frac{1}{\sqrt{2}}(|0\> \pm |1\>)$, and
		 the last equation follows from the fact that $[X|-\>]= [-|-\>]=[|-\>] $; that is, the global phase disappears in the density operator representation of quantum states.  
		
		Suppose we had adopted the pure-state approach to define our denotational semantics, and extended it to mixed states by taking the convex combination of pure-state semantics. Recall that in $\h_2$,
		\begin{equation}\label{eq:decompI}
			\frac{I}{2} = \frac{1}{2}\left([|0\>] + [|1\>]\right) = \frac{1}{2}\left([|+\>] + [|-\>]\right).
		\end{equation}
		Then from the first two equations in Eq.~\eqref{eq:pmsemantics} we would derive 
		\[
		\sem{S}(I_q/2) = \left\{[|0\>]_q, [|1\>]_q, I_q/2\right\},
		\]
		while from the last two equations in Eq.~\eqref{eq:pmsemantics} we would have
		$$
		\sem{S}(I_q/2) = \left\{I_q/2\right\}
		$$	   
		instead. 
		That is, the extended semantics for mixed states would not be well-defined.
		
		Note that this inconsistency does not exist for classical programs because, unlike the fact that a density operator can represent different ensembles of pure states, exemplified in Eq.~\eqref{eq:decompI} for $I/2$, any probability distribution of classical states has a unique representation (as a convex combination of underlying states).
		
	\end{example}

	\subsubsection{Relational v.s. Lifted Model}
	
	As pointed out in~\cite{jifeng1997probabilistic}, nondeterministic probabilistic programs can naturally be given two different semantic models: a relational one and a lifted one. Intuitively, in the lifted model, a nondeterministic program is semantically regarded as a set of deterministic  programs, while in the relational model, a dedicated semantics is given by taking into account the interference between probability and nondeterminism. The difference between these two models can be best illustrated in their treatment of sequential composition of programs. Specifically, let $\Sigma$ be the classical state space, ranged over by $s,t,$ etc. For simplicity, assume that $\Sigma$ is countable. Let $S$ and $T$ be nondeterministic probabilistic programs. Recall that the semantics $\sem{S}^{\mathbf{r}}$ of  $S$ in the relational model is a mapping from states in $\Sigma$ to sets of probability distributions over $\Sigma$. The semantics of the sequential composition $S;T$ is defined for any $s\in \Sigma$ as 
	\begin{equation}\label{eq:relational}
		\sem{S;T}^{\mathbf{r}}(s) = \left\{\sum_{t\in \Sigma} \mu(t) 	\cdot \nu_t : \mu\in \sem{S}^{\mathbf{r}}(s), \forall t. \nu_t\in \sem{T}^{\mathbf{r}}(t)\right\}.
	\end{equation}
	That is, each probability distribution in $\sem{S;T}^{\mathbf{r}}(s)$ is generated by first choosing a distribution $\mu$ in $\sem{S}^{\mathbf{r}}(s)$, then for each state $t$ choosing a distribution $\nu_t$ in $\sem{T}^{\mathbf{r}}(t)$, and finally taking the convex combination of $\nu_t$'s where the weight of $\nu_t$ is given by $\mu(t)$. In contrast, the semantics $\sem{S}^{\mathbf{l}}$ of  $S$ in the lifted model is a set of deterministic distribution-transformers which map states in $\Sigma$ to probability distributions over $\Sigma$. The lifted semantics of $S;T$ is defined to be	\begin{equation}\label{eq:lifted}
		\sem{S;T}^{\mathbf{l}} = \left\{ g\circ f : f\in \sem{S}^{\mathbf{l}}, g\in \sem{T}^{\mathbf{l}}\right\}
	\end{equation}			
	where $g\circ f$ is defined in the standard way: $(g\circ f)(s) = \sum_{t\in \Sigma} f(s)(t) 	\cdot g(t)$. Although these two models are both well-motivated, it was argued in~\cite{jifeng1997probabilistic} that the relational model is preferable since it enjoys more nice algebraic properties.
	
	Obviously, our definition in Fig.~\ref{fig:densemantics} follows the same style of the lifted model in Eq.~\eqref{eq:lifted}. To illustrate why it is problematic to have a relational semantics for nondeterministic quantum programs, let us take a close look at the definition of $\sem{S;T}^{\mathbf{r}}$ in Eq.~\eqref{eq:relational} from the perspective of a game. Intuitively, for each $t\in \Sigma$, $\nu_t$ is chosen from $\sem{T}^{\mathbf{r}}(t)$ by the adversary if the current program state is $t$. Note that this only makes sense under the  assumption that the program state at any given moment can be determined exactly during the runtime so that the adversary can use this information to make the choice. However, this assumption does not necessarily hold in the quantum setting, also due to the fact that a density operator can represent different ensembles of pure states. To make this point more clear, let us consider the following example.
	
		\begin{example}\label{exm:order}
		Let $S$ be defined as in Example~\ref{exm:pmsemantics} and		\begin{align*}
			T& \define q :=0; \ q\apply H;\ \mymeas\ q\\
			T_{\pm}& \define q :=0; \ \mymeas_{\pm}\ q
		\end{align*}
        \begin{sloppypar}
        	\noindent
		Here the command $\mymeas\ q$ is the abbreviation for $\iif\ \m_{0,1}[q]\ \then\ \sskip\ \eelse\ \sskip\ \pend$ and $\m_{0,1}$ is the measurement according to the computational basis $\{|0\>, |1\>\}$. Similarly, $\mymeas_{\pm}\ q$ is the abbreviation for $\iif\ \m_{\pm}[q]\ \then\ \sskip\ \eelse\ \sskip\ \pend$ and $\m_{\pm}$ is the measurement according to the orthonormal basis $\left\{|+\>, |-\>\right\}$. Note that both $T$ and $T_{\pm}$ are deterministic. Obviously, for any input state $\rho$ in $\d(\h_q)$ with $\tr(\rho) = 1$, the output state of $\ T$ is the ensemble $\left(|0\>: \frac{1}{2},  |1\> :  \frac{1}{2}\right)$, which can also be regarded as a probability distribution taking $|0\>$ or $|1\>$ uniformly. Similarly, the output state of $T_{\pm}$ for the same input $\rho$ is the ensemble $\left(|+\>: \frac{1}{2},  |-\> :  \frac{1}{2}\right)$, a probability distribution taking $|+\>$ or $|-\>$ uniformly. If we adopt the relational semantics $\sem{\cdot}^{\mathbf{r}}$ by regarding mixed quantum states as probability distributions over pure states, then we may have 
		\begin{align*}
			\sem{T}^{\mathbf{r}}(\rho) &= \left\{\left(|0\>: \frac{1}{2},  |1\> :  \frac{1}{2}\right)\right\},\\
			\sem{T_{\pm}}^{\mathbf{r}}(\rho) &= \left\{\left(|+\>: \frac{1}{2},  |-\> :  \frac{1}{2}\right)\right\}.
		\end{align*}
		Note that these two ensembles, although different in the probability distribution form, are physically indistinguishable; see Eq.~\eqref{eq:decompI}. Consequently, $\sem{T}^{\mathbf{r}}= \sem{T_{\pm}}^{\mathbf{r}}$.
		        \end{sloppypar}

		Now, consider the sequential composition of\; $T$ and $T_{\pm}$ with $S$ respectively. Similar to Eq.~\eqref{eq:relational}, we may have
		\begin{align*}
	\sem{T;S}^{\mathbf{r}}(\rho) &= \frac{1}{2} \cdot \sem{S}^{\mathbf{r}}\left([|0\>]\right) + \frac{1}{2} \cdot \sem{S}^{\mathbf{r}}\left([|1\>]\right) \\
	&= \frac{1}{2} \cdot \left\{|0\>, |1\>\right\} + \frac{1}{2} \cdot \left\{|0\>, |1\>\right\} \\
	& = \left\{
	\left(|0\>: 1\right), 
	\left(|1\>: 1\right), 
	\left(|0\>: \frac{1}{2},  |1\> :  \frac{1}{2}\right)
	\right\}
\end{align*}
	while
				\begin{align*}
			\sem{T_{\pm};S}^{\mathbf{r}}(\rho) &= \frac{1}{2} \cdot \sem{S}^{\mathbf{r}}\left([|+\>]\right) + \frac{1}{2} \cdot \sem{S}^{\mathbf{r}}\left([|-\>]\right)\\
			&= \frac{1}{2} \cdot \left\{|+\>\right\} + \frac{1}{2} \cdot \left\{|-\>\right\}  = \left\{
						\left(|+\>: \frac{1}{2},  |-\> :  \frac{1}{2}\right)
			\right\}.
		\end{align*}
        \begin{sloppypar}
		In the language of density operators, we can say that starting from $\rho$, 	$\sem{T;S}^{\mathbf{r}}$ may output $|0\>$, $|1\>$, or $\frac{I}{2}$ $\left(=  \frac{1}{2}|0\>\<0| +  \frac{1}{2}|1\>\<1|\right)$ non-deterministically while $\sem{T_{\pm};S}^{\mathbf{r}}$ can only output $\frac{I}{2}$ $\left(=  \frac{1}{2}|+\>\<+| +  \frac{1}{2}|-\>\<-|\right)$ although the nondeterministic choice also exists in it. In other words, $\sem{T;S}^{\mathbf{r}} \neq \sem{T_{\pm};S}^{\mathbf{r}}$ although $\sem{T}^{\mathbf{r}}= \sem{T_{\pm}}^{\mathbf{r}}$. 
		This is highly undesirable because semantic composability is a natural requirement of language design.
        \end{sloppypar}
	\end{example}

	\section{Verification of nondeterministic quantum programs}\label{sec:verification}
	
	Recall that a common practice in the verification of (deterministic or nondeterministic) probabilistic programs is to take expectations, that is, linear functions $f: \Sigma\ra [0,1]$ where $\Sigma$ is the set of classical states, as assertions. Then a program can be regarded as an expectation-transformer which maps a post-expectation to its greatest pre-expectation. The reason why deterministic and nondeterministic probabilistic programs can share the same form of assertions is that the set of expectations constitutes a complete lattice with respect to the pointwise partial order, where the join and meet are also defined pointwisely. Consequently, it is closed under taking infimum, the operation corresponding to the (demonic) nondeterminism.

	To describe desirable properties of a quantum state, we follow the approach of regarding hermitian operators from
		\[
	\p(\h_{\QVar}) \define \left\{M\in \l(\h_{\QVar}) : \z\le M\le I\right\}.
	\]
	 as quantum predicates in the verification of deterministic quantum programs~\cite{d2006quantum,feng2007proof,ying2012floyd}. Note that any hermitian operator $M\in \p(\h_{\QVar})$ induces a linear function $f_M: \d(\h_{\QVar})\ra [0,1]$ by setting $f_M(\rho) = \tr(M\rho)$ for any $\rho$.
	 Remarkably, recall from Sec.~\ref{sec:pre} that $\tr(M\rho)$ is exactly the expected value of outcomes when measuring the observable $M$ on state $\rho$. It is naturally interpreted as  the degree of $\rho$ satisfying $M$ if $M$ represents some desired property of the quantum system. 
	 	 
	 However, the set $\p(\h_{\QVar})$, although being a CPO, does not form a lattice. Thus it is not expressive enough for the verification of nondeterministic quantum programs. In this paper, we simply take $\a \define	2^{\p(\h_{\QVar})}$, the collection of subsets of $\p(\h_{\QVar})$, ranged over by $\qassert$, $\qassertp$, $\cdots$, as our set of quantum assertions. It is obviously a complete lattice in the usual subset order. When $\qassert = \{M\}$ is a singleton we simply write $\qassert$ as $M$. We further extend the operations applied on hermitian operators to quantum assertions in an element-wise way. For example, let $\e$ be a super-operator. Then by $\e(\qassert)$ we denote the set $\{\e(M) : M\in \qassert\}$.
	
\begin{definition}\label{def:satisfaction}
	Given a density operator $\rho\in \d(\h_{\QVar})$ and a quantum assertion $\qassert\in \a$, the expectation of $\qstate$ satisfying $\qassert$ is defined to be 
	$
	\Exp(\qstate \models \qassert) \define \inf_{M\in \qassert}
	\tr\left(M\rho\right).
	$ 
\end{definition}
The infimum taken in the above definition of $\Exp(\qstate \models \qassert)$ reflects a pessimistic view of the satisfaction: it provides a \emph{guaranteed} expected satisfaction of $\qassert$ by $\rho$ in the presence of possibly demonic nondeterminism. 
	
    \subsection{Correctness Formula}\label{sec:correctform}	
	As usual, program correctness is expressed by \emph{correctness formulas} with the form
	$\ass{\qassert}{S}{\qassertp}$
	where $S$ is a quantum program, and $\qassert$ and $\qassertp$ are  quantum assertions in $\a$.
	
	\begin{definition}\label{def:correctness}
		Let $S\in \prog$, and $\qassert, \qassertp\in \a$.
		\begin{enumerate}
			\item We say the correctness formula $\ass{\qassert}{S}{\qassertp}$ is true in the sense of \emph{total correctness}, written $\models_{\tot} \ass{\qassert}{S}{\qassertp}$, if for any 
			$\qstate\in \qstatesh{\QVar}$,
			$$	\Exp(\qstate \models \qassert)\leq \inf \left\{	\Exp(\sigma \models \qassertp) : \sigma\in \sem{S}(\qstate) \right\}.$$
			\item We say the correctness formula $\ass{\qassert}{S}{\qassertp}$ is true in the sense of \emph{partial correctness}, written $\models_{\pal} \ass{\qassert}{S}{\qassertp}$, if for any 
			$\qstate\in \qstatesh{\QVar}$,
			\begin{align*}
			    \Exp(\qstate \models \qassert)\leq &\inf \left\{	\Exp(\sigma \models \qassertp) + \tr(\rho) - \tr(\sigma)\right.\\
			    &\hspace{7em} \left.: \sigma\in \sem{S}(\qstate) \right\}.
			\end{align*}	
		\end{enumerate}
	\end{definition}
	Intuitively, $\models_{\tot} \ass{\qassert}{S}{\qassertp}$ iff starting from any initial state, the guaranteed expected satisfaction of postcondition $\qassertp$ by any possible final state is lower bounded by the guaranteed expected satisfaction of precondition $\qassert$ by the initial state. For the case of partial correctness, we relax the lower bound by taking the non-termination probability $\tr(\rho) - \tr(\sigma)$ into account. It is easy to see that when $S$ is deterministic, and both $\qassert$ and $\qassertp$ contain only one quantum predicate, the above definition reduces to the corresponding one in~\cite{ying2012floyd} for deterministic quantum programs.
	
	\begin{example}\label{exa:err-correctness}
		The correctness of quantum error correction scheme for bit flip in Example~\ref{exm:errcor} can be stated as  follows: for any $|\psi\>\in \h_q$,
		\begin{equation}\label{eq:defexp}
			\models_{\tot}\ass{|\psi\>_q\<\psi|}{\mathit{ErrCorr}}{|\psi\>_{q}\<\psi|}.
		\end{equation}
		To see why Eq.\eqref{eq:defexp} captures the intuition that the (arbitrary)  state of qubit $q$ has been successfully
		protected from the possible bit-flip error, note that Eq.\eqref{eq:defexp} implies
		$$1=\<\psi|\psi\>\<\psi|\psi\> \leq \inf\left\{ \<\psi|\sigma|\psi\> : \sigma\in \sem{\mathit{ErrCorr}}(|\psi\>\<\psi|)\right\}.$$
		Thus $\sigma = |\psi\>\<\psi|$ for all $\sigma\in \sem{\mathit{ErrCorr}}(|\psi\>\<\psi|)$.
	\end{example}
	
		Note that for $*\in \{\tot, \pal\}$, if there exists $M\in \qassert$ such that $\models_* \ass{M}{S}{N}$ for all $N\in \qassertp$, then $\models_* \ass{\qassert}{S}{\qassertp}.$
	However, the reverse is not true. A counterexample is as follows. Let $\QVar \define \{q\}$, $\qassert \define \{|0\>_q\<0|, |1\>_q\<1|\}$, $\qassertp \define \{I_q/2\}$, and $S\define \sskip$. Note that for any 			$\qstate\in \qstatesh{\QVar}$, $\tr(|0\>_q\<0|\rho) + \tr(|1\>_q\<1|\rho) = \tr(\rho)$. Thus
	\begin{align*}
	\Exp\left(\qstate\models \qassert\right) &= \min \left\{\tr(|0\>_q\<0|\rho), \tr(|1\>_q\<1|\rho)\right\}\\
	&\leq \frac{\tr(\qstate)}{2} = \Exp\left(\qstate\models \qassertp\right),    
	\end{align*}
	and so $\models_{*} \ass{\qassert}{S}{\qassertp}$. However, neither 
	$\models_{*} \ass{|0\>_q\<0|}{S}{I_q/2}$ nor $\models_{*} \ass{|1\>_q\<1|}{S}{I_q/2}$ holds. To see the former one, we note that
	\[
	\Exp\left(|0\>_q\<0|\models |0\>_q\<0|\right) = 1 \not\leq 1/2  = \Exp\left(|0\>_q\<0|\models I_q/2\right).
	\] 
	For the latter one, we can take $|1\>_q\<1|$ as the initial state. 
	
	The following are some basic facts about total and partial correctness.
	
	\begin{lemma}\label{lem:corf}
		Let $S\in \prog$, $\qassert$ and $\qassertp$ be quantum assertions. 
		\begin{enumerate}
			\item If $\models_{\tot} \ass{\qassert}{S}{\qassertp}$ then $\models_{\pal} \ass{\qassert}{S}{\qassertp}$;
			\item $\models_{\tot} \ass{\z}{S}{\qassertp}$ and $\models_{\pal} \ass{\qassert}{S}{I}$;
			\item If $\models_{*} \ass{\qassert_i}{S}{\qassertp_i}$ for all $i$, then 
			$
			\models_{*} \ass{\bigcup_i \qassert_i}{S}{\bigcup_i \qassertp_i},
			$ where $*\in \{\tot, \pal\}$.
		\end{enumerate}
	\end{lemma}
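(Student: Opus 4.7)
The plan is to unpack Definitions~\ref{def:satisfaction} and~\ref{def:correctness} and reduce each item to a short manipulation with infima, using two standing facts from the setup: every super-operator in $\sem{S}$ is trace-nonincreasing, and every $M\in\p(\h_{\QVar})$ satisfies $\z\le M\le I$.

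For (1), I would observe that trace-nonincreasingness gives $\tr(\rho)-\tr(\sigma)\ge 0$ for every $\sigma\in\sem{S}(\rho)$, so pointwise in $\sigma$ we have $\Exp(\sigma\models\qassertp)\le \Exp(\sigma\models\qassertp)+\tr(\rho)-\tr(\sigma)$. Taking infima over $\sigma\in\sem{S}(\rho)$ preserves this inequality, and chaining with the hypothesis $\models_{\tot}\ass{\qassert}{S}{\qassertp}$ yields $\models_{\pal}\ass{\qassert}{S}{\qassertp}$.

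For (2), both claims are direct computations. The precondition with $\qassert=\z$ gives $\Exp(\rho\models\{0\})=\tr(0\cdot\rho)=0$, while positivity of each $N\in\qassertp$ and each $\sigma\in\sem{S}(\rho)$ forces $\tr(N\sigma)\ge 0$, so the total-correctness right-hand side is non-negative. For the postcondition $\{I\}$, the partial-correctness right-hand side simplifies to $\tr(I\sigma)+\tr(\rho)-\tr(\sigma)=\tr(\rho)$, and the left-hand side $\inf_{M\in\qassert}\tr(M\rho)$ is bounded above by $\tr(\rho)$ since $M\le I$ for every $M\in\qassert$.

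For (3), the key step is to commute two infima. Since $\Exp(\rho\models\qassert)=\inf_{M\in\qassert}\tr(M\rho)$ is itself a pointwise infimum, we have $\Exp(\rho\models\bigcup_i\qassert_i)=\inf_i\Exp(\rho\models\qassert_i)$, and analogously for the postcondition. Swapping the index-$i$ infimum with the $\sigma\in\sem{S}(\rho)$ infimum, applying the per-$i$ hypothesis, and then taking $\inf_i$ on both sides delivers the desired inequality. In the partial-correctness case the extra term $\tr(\rho)-\tr(\sigma)$ is independent of $i$ and passes through the interchange unchanged. All three arguments are routine; the most care is needed in (3), where the interchange of infima, though always valid for infima over arbitrary index sets, deserves to be stated explicitly.
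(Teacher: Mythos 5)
Your proposal is correct and follows essentially the same route as the paper, whose proof is a one-line appeal to exactly the facts you elaborate: $\tr(\sigma)\le\tr(\rho)$ for all $\sigma\in\sem{S}(\rho)$ for clause (1), the bound $\tr(M\rho)\le\tr(\rho)$ for $M\le I$ for clause (2), and a direct unfolding of Definition~\ref{def:correctness} for clause (3). Your explicit interchange of the index-$i$ infimum with the $\sigma$-infimum in (3) is a careful spelling-out of what the paper leaves as ``from the definition,'' and it is valid as you note since infima over arbitrary index sets always commute.
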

	\begin{proof}
		Clause (1) follows directly from the fact that $\tr(\sigma)\leq \tr(\rho)$ for all $\sigma\in \sem{S}(\rho)$, clause (3) from the definition, and clause (2) from the observation that for any $M\in \qassert$, $\tr(M\rho) \leq \tr(\rho)$.
	\end{proof}	
	
	To conclude this subsection, we define a pre-order between quantum assertions by letting $\qassert \leinf \qassertp$ if for any $\rho$, $\inf_{M\in \qassert} \tr(M\rho) \leq \inf_{N\in \qassertp} \tr(N\rho)$. The following lemma is useful in our later discussion.
	
	\begin{lemma}\label{lem:propleinf}
		\begin{enumerate}
			\item Let $\e$ be a super-operator and $\qassert \leinf \qassertp$. Then $\e^\dag(\qassert) \leinf \e^\dag(\qassertp)$. Recall that $\e^\dag$ is the adjoint super-operator of $\e$. 
			\item If for all $i$, $\qassert_i \leinf \qassertp_i$, then $\bigcup_i \qassert_i \leinf \bigcup_i \qassertp_i$.
		\end{enumerate}
	\end{lemma}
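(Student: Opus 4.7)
The plan is to prove both clauses directly from the definition of $\leinf$, using two tools already on the table: the adjoint identity $\tr(\e^\dag(A)B)=\tr(A\,\e(B))$ stated in Sec.~\ref{sec:pre}, and the elementary fact that an infimum over a union equals the iterated infimum.

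For clause~(1), I would fix an arbitrary $\rho$ and aim to show
\[
\inf_{M\in \qassert}\tr\!\bigl(\e^\dag(M)\rho\bigr)\;\le\;\inf_{N\in \qassertp}\tr\!\bigl(\e^\dag(N)\rho\bigr).
\]
Pushing $\e^\dag$ across the trace via the adjoint identity rewrites the left-hand side as $\inf_{M\in \qassert}\tr(M\cdot \e(\rho))$ and the right-hand side as $\inf_{N\in \qassertp}\tr(N\cdot \e(\rho))$. Because $\e$ is completely positive and trace-nonincreasing, $\sigma:=\e(\rho)$ is again a partial density operator on $\h_{\QVar}$, so instantiating the hypothesis $\qassert\leinf\qassertp$ at $\sigma$ yields precisely the inequality required. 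One small piece of bookkeeping is to verify that $\e^\dag(\qassert)\subseteq \p(\h_{\QVar})$ so that both sides really are quantum assertions in the sense of the paper; this is immediate from the positivity of $\e^\dag$ together with $\e^\dag(I)\le I$, which itself follows from $\e$ being trace-nonincreasing.

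For clause~(2), I would again fix $\rho$ and exploit the identity
\[
\inf_{M\in \bigcup_i \qassert_i}\tr(M\rho)\;=\;\inf_i\,\inf_{M\in \qassert_i}\tr(M\rho),
\]
together with its analogue for the $\qassertp_i$'s. For each index $i$, the assumption $\qassert_i\leinf\qassertp_i$ supplies the pointwise inequality $\inf_{M\in\qassert_i}\tr(M\rho)\le \inf_{N\in\qassertp_i}\tr(N\rho)$; taking the infimum over $i$ on both sides preserves this inequality and delivers the claim.

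I do not foresee a real obstacle in either clause: clause~(2) is a routine manipulation of nested infima, and clause~(1) reduces immediately to the hypothesis once the adjoint identity is applied. If any step merits attention, it is the sanity check in~(1) that $\e^\dag$ carries $\p(\h_{\QVar})$ into itself, which is exactly where the \emph{trace-nonincreasing} (rather than merely positive) assumption on $\e$ earns its keep.
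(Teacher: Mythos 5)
Your proof is correct and takes essentially the same route as the paper, which dismisses the lemma with ``Easy from the definition of $\leinf$'': your adjoint-identity step $\tr(\e^\dag(M)\rho)=\tr(M\,\e(\rho))$ followed by instantiating the hypothesis at $\e(\rho)$, and the nested-infimum manipulation for unions, are precisely the routine details that remark leaves implicit. Your added sanity check that $\e^\dag$ maps $\p(\h_{\QVar})$ into itself (via positivity and $\e^\dag(I)\le I$ from trace-nonincreasingness) is a legitimate and welcome piece of bookkeeping the paper omits.
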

 \begin{proof}
 	Easy from the definition of $\leinf$.
 \end{proof}

	\subsection{Proof Systems}
	
	{\renewcommand{\arraystretch}{2.7}
		\begin{figure}[t]
			\begin{lrbox}{\tablebox}
				\centering
				\begin{tabular}{l}
					\begin{tabular}{lc}
						(Skip)	& $\ass{\qassert}{\sskip}{\qassert}$\\  
						(Abort) 	& $\ass{I}{\abort}{\z}$\\
						\smallskip
						(Init)	& $\ass{\displaystyle\sum_{i=0}^{2^n-1} \quiz \qassert\quzi}{\bar{q}:=0}{\qassert}$ \\
						(Unit)	&
						$\ass{U_{\bar{q}}^\dag \qassert U_{\bar{q}}}{\bar{q}\apply U}{\qassert}$ \\
												\smallskip
						(Seq)	&
						$\displaystyle\frac{\ass{\qassert}{S_0}{\qassert'},\ \ass{\qassert'}{S_1}{\qassertp}}{\ass{\qassert}{S_0; S_1}{\qassertp}}$ \\
						(NDet)	&
						$\displaystyle\frac{\ass{\qassert}{S_0}{\qassertp},\ \ass{\qassert}{S_1}{\qassertp}}{\ass{\qassert}{S_0\ \square\ S_1}{\qassertp}}$\\
												\smallskip
						(Meas)	&
						$\displaystyle\frac{\ass{\qassert_1}{S_1}{\qassertp},\ \ass{\qassert_0}{S_0}{\qassertp}}{\ass{\p^0_{\bar{q}}(\qassert_0) + \p^1_{\bar{q}}(\qassert_1)}{\pmstm}{\qassertp}}$
						\\
						(While)	& $\displaystyle\frac{\ass{\qassert}{S}{\p^0_{\bar{q}}(\qassertp) + \p^1_{\bar{q}}(\qassert)}
						}{\ass{\p^0_{\bar{q}}(\qassertp) + \p^1_{\bar{q}}(\qassert)}{\pwstm}{\qassertp}}$ \\
											\smallskip
						(Imp)	&
						$\displaystyle\frac{\qassert\leinf \qassert',\ \ass{\qassert'}{S}{\qassertp'},\ \qassertp'\leinf \qassertp}{\ass{\qassert}{S}{\qassertp}}$\\
						(Union)	&
						$\displaystyle\frac{ \ass{\qassert_i}{S}{\qassertp_i} \mbox{ for all } i\in I}{\ass{\bigcup_{i\in I}\qassert_i}{S}{\bigcup_{i\in I}\qassertp_i}}$
					\end{tabular}\\
				\end{tabular}
			\end{lrbox}
			\resizebox{0.5\textwidth}{!}{\usebox{\tablebox}}\\
			\vspace{4mm}
			\caption{Proof system for partial correctness. 
			}
			\label{tbl:psystem}
		\end{figure}
	}
	The core of Hoare logic is a proof system consisting of axioms and proof rules which enable syntax-oriented and modular reasoning of program correctness. In this section, we propose a Hoare logic for nondeterministic quantum programs. 
	
	\textbf{Partial Correctness}. We propose in Fig.~\ref{tbl:psystem} a proof system for partial correctness of quantum programs, which is a natural extension of the Hoare logic system introduced in~\cite{ying2012floyd} for deterministic quantum programs.   
	
	To help understand the rules presented in~Fig.~\ref{tbl:psystem}, let us compare them with their counterparts for classical programs. The rules (Skip), (Abort), (Seq), (NDet), (Imp) have the same form as the corresponding classical ones. Note that  for any pure state $|\psi\>$, $\tr(I\cdot |\psi\>\<\psi|) = 1$ and  $\tr(0\cdot |\psi\>\<\psi|) = 0$. Thus, the quantum predicates $I$ and $0$ play similar roles as $\true$ and $\false$ do respectively in classical assertions. The rules (Init) and (Unit) are the counterparts of the classical assignment rule, and they can be better understood in a backwards fashion. For example, (Unit) means that to guarantee postcondition $\qassertp$, it suffices to have $U_{\bar{q}}^\dag \qassert U_{\bar{q}}$ as the precondition.

	Recall the proof rule for classical conditional statements
	\begin{equation}\label{eq:crule} \displaystyle\frac{\ass{p\wedge B}{S_1}{q},\ \ass{p\wedge \neg B}{S_0}{q}
	}{\ass{p}{\iif\ B\ \then\ S_1\ \eelse\ S_0\ \pend}{q}}.
\end{equation}
	Due to the lack of conjunction of quantum assertions, our proof rule (Meas) takes an alternative form, adding the preconditions of the two branches of $\pmstm$ after applying the corresponding measurement super-operators. Note that $p = (p\wedge B) \vee (p\wedge \neg B)$. So (Meas) is indeed a generalisation of the rule in Eq.~\eqref{eq:crule}. Inspired by~\cite{ying2012floyd}, the quantum assertion $\p^0_{\bar{q}}(\qassertp) + \p^1_{\bar{q}}(\qassert)$ in rule (While) serves as a \emph{loop invariant} of the while loop. Finally, we introduce rule (Union) to combine different correctness formulas for the same quantum program, mimicking the conjunction rule for classical programs.
	
	 Denote by $\vdash_{\pal} \ass{\qassert}{S}{\qassertp}$ if the correctness formula $\ass{\qassert}{S}{\qassertp}$ can be derived from the proof system.

	\begin{theorem}\label{thm:psc}
		The proof system in Fig.~\ref{tbl:psystem} is both sound and relatively complete with respect to the partial correctness of nondeterministic quantum programs. 
	\end{theorem}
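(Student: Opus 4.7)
The plan is to prove the two directions separately: soundness by rule induction on derivations, and relative completeness by the standard weakest-liberal-precondition (wlp) technique, adapted to the setting of sets of quantum predicates.

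For soundness, I would show by induction on $\vdash_{\pal} \ass{\qassert}{S}{\qassertp}$ that $\models_{\pal} \ass{\qassert}{S}{\qassertp}$. The base cases (Skip), (Abort), (Init), (Unit) reduce to direct trace calculations using the adjoint identity $\tr(M\e(\rho))=\tr(\e^\dag(M)\rho)$ together with the fact that $\tr(\e(\rho))\le\tr(\rho)$ for trace-nonincreasing $\e$. For (Seq) one chains the two inequalities; for (NDet) one uses $\sem{S_0\ \square\ S_1}(\rho)=\sem{S_0}(\rho)\cup\sem{S_1}(\rho)$, so any candidate output $\sigma$ is covered by one of the two premises. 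For (Meas) one splits along the measurement branches and invokes additivity of the trace. The rules (Imp) and (Union) are immediate from Lemma~4.2. For (While), the core argument is to show by induction on $n$ that the invariant $\p^0_{\bar{q}}(\qassertp)+\p^1_{\bar{q}}(\qassert)$ is preserved by each finite unrolling $\f_n^\eta$ under every scheduler $\eta\in\sem{S}^\N$, and then to pass to the supremum $\bigvee_n \f_n^\eta$ using continuity of the trace.

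For relative completeness, the idea is to introduce the weakest liberal precondition
\[
\mathrm{wlp}(S,\qassertp)\define\bigl\{\e^\dag(N)+I-\e^\dag(I):\e\in\sem{S},\ N\in\qassertp\bigr\},
\]
verify that each element lies in $\p(\h_{\QVar})$ using positivity and trace-nonincreasingness of $\e^\dag$, and establish the semantic characterisation $\models_{\pal}\ass{\qassert}{S}{\qassertp}$ iff $\qassert\leinf\mathrm{wlp}(S,\qassertp)$, which reduces to the identity
\[
\tr\bigl((\e^\dag(N)+I-\e^\dag(I))\rho\bigr)=\tr(N\e(\rho))+\tr(\rho)-\tr(\e(\rho)).
\]
Then by structural induction on $S$ one shows $\vdash_{\pal}\ass{\mathrm{wlp}(S,\qassertp)}{S}{\qassertp}$: each program construct is matched by the corresponding proof rule, (Union) handles the indexing over $\e\in\sem{S}$ and $N\in\qassertp$, and an application of (Imp) closes the goal whenever $\qassert\leinf\mathrm{wlp}(S,\qassertp)$.

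The main obstacle will be the while-loop case of completeness. Setting $\qassert\define\mathrm{wlp}(\pwstm,\qassertp)$ as the loop invariant, the (While) rule requires deriving $\vdash_{\pal}\ass{\mathrm{wlp}(S,\,\p^0_{\bar{q}}(\qassertp)+\p^1_{\bar{q}}(\qassert))}{S}{\p^0_{\bar{q}}(\qassertp)+\p^1_{\bar{q}}(\qassert)}$, which in turn relies on showing $\qassert\leinf \p^0_{\bar{q}}(\qassertp)+\p^1_{\bar{q}}(\qassert)$ so that (Imp) can be applied. Since $\mathrm{wlp}$ of the loop is defined through schedulers $\eta$ and an infimum over the family $\f_n^\eta$, I would exploit the fixed-point description $\sem{\pwstm}=\p^0_{\bar{q}}+\sem{\pwstm}\circ\sem{S}\circ\p^1_{\bar{q}}$ from Lemma~3.3 to verify the invariance identity at the semantic level, and then lift it to a derivation using (While), (Imp), and Lemma~4.2. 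The delicate point is that the relevant supremum and infimum interact: one must commute the $\inf$ over $N$ and $\e$ with the limit over $n$, which I expect to handle via monotone convergence of the ascending chain $\f_n^\eta$ together with compactness arguments on the bounded operator set $\p(\h_{\QVar})$.
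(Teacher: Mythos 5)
Your proposal is correct and follows essentially the same route as the paper: completeness via the weakest liberal precondition, with exactly the paper's characterisation $wlp.S.M = \{\e^\dag(M) + I - \e^\dag(I) : \e\in\sem{S}\}$ (Lemma~\ref{lem:wpwlp}) and the reduction $\models_{\pal}\ass{\qassert}{S}{\qassertp}$ iff $\qassert\leinf wlp.S.\qassertp$ (Lemma~\ref{lem:rulecond}); and soundness by rule induction, where your direct induction on the finite unrollings $\f_n^\eta$ is the semantic dual of the paper's induction on the $wlp$-approximants $M_i^\eta$ (the two are tied by $I - M_i^\eta = (\f_i^\eta)^\dag(I-M)$, which the paper notes explicitly). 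The one point you misjudge is the anticipated ``main obstacle'' in the loop case of completeness: no interchange of the infimum over $N$ and $\e$ with the limit over $n$, and no compactness argument on $\p(\h_{\QVar})$, is needed. The paper proves (Lemma~\ref{lem:recpredicate}) the \emph{exact} set-level fixed-point identity $wlp.(\pwstm).M = \p^0_{\bar{q}}(M) + \p^1_{\bar{q}}\bigl(wlp.S.(wlp.(\pwstm).M)\bigr)$, obtained purely algebraically by substituting the decomposition $\sem{\pwstm} = \p^0_{\bar{q}} + \sem{\pwstm}\circ\sem{S}\circ\p^1_{\bar{q}}$ of Lemma~\ref{lem:whilesem} into the elementwise $wlp$ characterisation (using $\p^0_{\bar{q}}(I)+\p^1_{\bar{q}}(I)=I$). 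Because this is an equality rather than a $\leinf$ estimate, the loop case needs no intermediate (Imp) step of the kind you sketch: working per predicate $M\in\qassertp$ and setting $\qassert' \define wlp.S.(wlp.(\pwstm).M)$, the rule's precondition $\p^0_{\bar{q}}(M)+\p^1_{\bar{q}}(\qassert')$ literally coincides with $wlp.(\pwstm).M$, so (While) applies directly from the induction hypothesis, and the per-$M$ derivations are reassembled with (Union) and Lemma~\ref{lem:propleinf}(2). The only limit argument in the entire development is the harmless sup-continuity $\bigvee_i (\f_i^\eta)^\dag(M) = \bigl(\bigvee_i \f_i^\eta\bigr)^\dag(M)$ used inside Lemma~\ref{lem:wpwlp}; your own mention of exploiting the fixed-point description of $\sem{\pwstm}$ is exactly the right lever, so the monotone-convergence and compactness machinery you reserve for this step can simply be dropped.
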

 	\begin{proof}(Sketch) Similar to~\cite{ying2012floyd}, the key idea for the proof here is to define the notion of weakest liberal precondition for quantum programs. Specially, for any nondeterministic quantum program $S$ and postcondition $\qassertp \in \a$, we construct a quantum assertion, denoted $wlp.S.\qassertp$, which is the weakest (or largest in terms of $\le_{\mathit{inf}}$) one among all valid preconditions. In other words, for any $\qassert\in \a$,  $\models_{\pal} \ass{\qassert}{S}{\qassertp}$ iff $\qassert\le_{\mathit{inf}} wlp.S.\qassertp$. 
 		
 	\begin{sloppypar}	
 	Now, to prove the soundness of our logic system, it suffices to show by induction on the structure of $S$ that whenever $\vdash_{\pal}\ass{\qassert}{S}{\qassertp}$, it holds $\qassert\le_{\mathit{inf}} wlp.S.\qassertp$. For the completeness part, we need to prove, again by structural induction, that $\vdash_{\pal}\ass{wlp.S.\qassertp}{S}{\qassertp}$. For more details, please refer to Appendix~\ref{app:proof}.  \qedhere
 	\end{sloppypar}
 	\end{proof}

	\textbf{Total Correctness}. Ranking functions play a central role in proving total correctness of while loop programs. Recall that in the classical case, a ranking function maps each reachable state during the execution of the loop body to an element of a well-founded ordered set (say, the set of nonnegative integers), such that the value decreases strictly after each iteration of the loop. Our proof rule for total correctness of while loops also heavily relies on the notion of ranking assertion. The following definition is inspired by the corresponding concept proposed in~\cite{feng2021quantum}. However, here we have to take into account possible nondeterministic choices in the loop body.	
	
	\begin{definition}\label{def:rank}
		\begin{sloppypar}
		Let $\widehat{\qassert}$ be a quantum assertion. A set  of quantum predicates $\left\{R_i^\eta: i\geq 0, \eta\in \sem{S}^{\N}\right\}$ is called a $\widehat{\qassert}$-\emph{ranking assertion} for $\pwstm$ if for each $\eta $, 
		\begin{enumerate}
			\item $\widehat{\qassert} \leinf R_0^\eta$;
			\item the sequence $R_i^\eta$, $i\geq 0$, is decreasing with respect to $\le$; that is, $R_0^\eta \ge R_1^\eta \ge\ldots$. Furthermore,  $\bigwedge_i R_i^\eta = \z$;
			\item for any $i\geq 0$  and $\qstate\in \d(\h_{\QVar})$,
			\begin{equation}\label{eq:rank}
				\tr\left( R_i^{\eta^\ra}\cdot \left(\eta_1\circ \p^1_{\bar{q}}(\qstate)\right)\right)\leq \tr\left(R_{i+1}^\eta\cdot \qstate\right).
			\end{equation}
			This can also be more compactly written as
	 \begin{equation}\label{eq:compact}
	 	\p^1_{\bar{q}} \circ \eta_1^\dag \left(R_i^{\eta^\ra}\right) \le R_{i+1}^\eta.
\end{equation}
		\end{enumerate}
\end{sloppypar}
	\end{definition}

	As can be seen from the rule (WhileT) below, $\widehat{\qassert}$ is usually taken to be the precondition of the correctness formula we are concerned with, which is also a loop invariant of $\pwstm$. The basic idea of 
	$\widehat{\qassert}$-{ranking assertion} is to provide a sequence of  upper bounds on $\widehat{\qassert}$. We will explain it in more detail below.
	
	For simplicity, assume $\widehat{\qassert} = \p^0_{\bar{q}}(M) + \p^1_{\bar{q}}(\qassert)$ for some quantum predicate $M$ and assertion $\qassert$. Let $M_0^\eta=\z$ and $M_i^\eta$, $i\geq 1$, be the quantum predicate representing the weakest precondition of $M$ if the loop body is only executed $i-1$ times; that is, for any $\rho$,
	\[
	\tr\left(M_i^\eta\cdot \rho\right) = \tr\left(M\cdot  \f_{i-1}^\eta(\rho)\right)
	\]
	where $\f_i^\eta$ is defined in Eq.~\eqref{eq:defF}. The goal of $\widehat{\qassert}$-{ranking assertion} $R_i^\eta$ is then to make sure that
	\begin{equation}\label{eq:rankbound}
		\p^0_{\bar{q}}(M) + \p^1_{\bar{q}}(\qassert)\le_{\mathit{inf}} M_i^\eta + R_i^\eta.
	\end{equation}
	Clause (1) of Definition~\ref{def:rank} establishes the bound for $i=0$, while clause (3), together with the assumption that $\p^0_{\bar{q}}(M) + \p^1_{\bar{q}}(\qassert)$ is a loop invariant, guarantee that Eq.~\eqref{eq:rankbound} holds inductively for all $i$. This can be shown by using Lemma~\ref{lem:whilesem}. Finally, because of clause (2), when $i$ tends to infinity, $M_i^\eta$ alone acts as the upper bound, which in turn implies that $\p^0_{\bar{q}}(M) + \p^1_{\bar{q}}(\qassert)$ is a valid precondition of $M$ with respect to $\pwstm$ in the total correctness sense.  
	
	With the notion of ranking assertion, we can state the rule (WhileT) for while loops in total correctness as follows:
	$$\small \mathrm{(WhileT)}\qquad  \displaystyle\frac{
		\begin{tabular}{c}
		$\ass{\qassert}{S}{\p^0_{\bar{q}}(\qassertp) + \p^1_{\bar{q}}(\qassert)}$\\
			$\p^0_{\bar{q}}(\qassertp) + \p^1_{\bar{q}}(\qassert)$-ranking assertion exists
		\end{tabular}
	}{\ass{\p^0_{\bar{q}}(\qassertp) + \p^1_{\bar{q}}(\qassert)}{\pwstm}{\qassertp}}$$
	The {proof system for total correctness} is then defined as for partial correctness, except that the rule (While) is replaced by (WhileT), and rule (Abort) replaced by
	\[\mathrm{(AbortT)}\qquad \ass{\z}{\abort}{\z}.\] 
	
	We write $\vdash_{\tot}\ass{\qassert}{S}{\qassertp}$ if the correctness formula $\ass{\qassert}{S}{\qassertp}$ can be derived using the proof system for total correctness. Again, we can prove the soundness and relative completeness of the proof system for total correctness.
	
	\begin{theorem}\label{thm:total}
		The proof system for total correctness is both sound and relatively complete with respect to the total correctness of nondeterministic quantum programs.
	\end{theorem}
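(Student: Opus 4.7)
The plan is to mirror the proof of Theorem~\ref{thm:psc}, replacing \emph{weakest liberal preconditions} by \emph{weakest preconditions}. I would define $wp.S.\qassertp\in\a$ by structural induction: the basic cases are dictated by the corresponding axioms ($wp.\abort.\qassertp = \z$, $wp.(\bar{q}\apply U).\qassertp = U_{\bar{q}}^\dag\qassertp U_{\bar{q}}$, and so on), $wp.(S_0;S_1).\qassertp = wp.S_0.(wp.S_1.\qassertp)$, $wp.(S_0\ \square\ S_1).\qassertp = wp.S_0.\qassertp\cup wp.S_1.\qassertp$, the two-branch combination for $\pmstm$, and for $\pwstm$ an infimum over schedulers of the Kleene-style chain of finite unfoldings. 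The key semantic lemma, proved as for $wlp$ using the adjoint identity $\tr(\e(A)B)=\tr(A\e^\dag(B))$, is that $\models_{\tot}\ass{\qassert}{S}{\qassertp}$ iff $\qassert\leinf wp.S.\qassertp$.

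\textbf{Soundness} then reduces to showing, by structural induction on $S$, that $\vdash_{\tot}\ass{\qassert}{S}{\qassertp}$ implies $\qassert\leinf wp.S.\qassertp$. All rules other than (WhileT) are handled exactly as in the partial-correctness argument, and (AbortT) is trivial since $\z\leinf wp.\abort.\qassertp = \z$. For (WhileT), write $\widehat{\qassert}=\p^0_{\bar{q}}(\qassertp)+\p^1_{\bar{q}}(\qassert)$; given the premise $\vdash_{\tot}\ass{\qassert}{S}{\widehat{\qassert}}$ together with a $\widehat{\qassert}$-ranking assertion $\{R_i^\eta\}$, I would prove by induction on $i$ the bound
\[
\widehat{\qassert}\leinf M_i^\eta + R_i^\eta,\qquad \tr(M_i^\eta\,\rho) := \tr(\qassertp\cdot \f_{i-1}^\eta(\rho)),
\]
already sketched informally before the rule: clause (1) of Definition~\ref{def:rank} settles $i=0$; Lemma~\ref{lem:whilesem}, clause (3), and the inductive hypothesis applied to $S$ supply the inductive step; and clause (2) forces $R_i^\eta$ to vanish in the limit, giving $\widehat{\qassert}\leinf \inf_\eta\sup_i M_i^\eta = wp.\pwstm.\qassertp$.

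For \textbf{relative completeness}, a second structural induction establishes $\vdash_{\tot}\ass{wp.S.\qassertp}{S}{\qassertp}$, after which (Imp) yields the general claim. Only the while loop case differs materially from the partial-correctness argument. A fixed-point argument analogous to Lemma~\ref{lem:whilesem} gives $wp.\pwstm.\qassertp = \p^0_{\bar{q}}(\qassertp) + \p^1_{\bar{q}}(\qassert)$ with $\qassert:=wp.S.(wp.\pwstm.\qassertp)$, and the premise $\vdash_{\tot}\ass{\qassert}{S}{wp.\pwstm.\qassertp}$ is supplied by the inner hypothesis, so (WhileT) can be invoked once I exhibit a $(\p^0_{\bar{q}}(\qassertp)+\p^1_{\bar{q}}(\qassert))$-ranking assertion. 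My candidate is to let $R_i^\eta$ be the quantum analogue of the ``survival'' assertion measuring the probability of not yet having terminated after $i$ iterations of $\pwstm$ under scheduler $\eta$, built by composing $\p^1_{\bar{q}}$ with the adjoints $\eta_j^\dag$ along the tail of $\eta$.

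The \textbf{main obstacle} is verifying clauses (1)--(3) of Definition~\ref{def:rank} for this $\{R_i^\eta\}$, and in particular the uniform vanishing $\bigwedge_i R_i^\eta = \z$ across all schedulers. Here the total-correctness flavour really bites: absent any classical well-founded ordering on quantum states, convergence has to be extracted from the total-correctness hypothesis itself --- namely, from the fact that the non-termination probability of $\pwstm$ on $wp.\pwstm.\qassertp$ vanishes under every scheduler --- and then converted into a monotone-decreasing-to-zero statement on hermitian operators. Once this convergence is in hand, verifying Eq.~\eqref{eq:compact} for the chosen $R_i^\eta$ by unfolding the definition of $\f_i^\eta$ and invoking Lemma~\ref{lem:whilesem} once more is a largely mechanical calculation.
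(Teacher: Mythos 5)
Your overall architecture---the $wp$ calculus defined by structural induction, the equivalence $\models_{\tot}\ass{\qassert}{S}{\qassertp}$ iff $\qassert\leinf wp.S.\qassertp$, and the soundness induction establishing $\p^0_{\bar{q}}(\qassertp)+\p^1_{\bar{q}}(\qassert)\leinf M_i^\eta+R_i^\eta$ for (WhileT)---coincides with the paper's proof. The genuine gap is in your completeness argument for loops: the \emph{survival} operators $\p^1_{\bar{q}}\circ\eta_1^\dag\circ\cdots\circ\p^1_{\bar{q}}\circ\eta_i^\dag(I)$ are not a valid ranking assertion, because they decrease not to $\z$ but to the observable giving the \emph{non-termination} probability under $\eta$, which vanishes only when the loop terminates almost surely; clause (2) of Definition~\ref{def:rank} demands $\bigwedge_i R_i^\eta=\z$ unconditionally. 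Your proposed rescue---extracting the vanishing ``from the total-correctness hypothesis itself''---is unavailable, since the completeness proof carries no termination hypothesis whatsoever: one must derive $\vdash_{\tot}\ass{wp.\while.M}{\while}{M}$ for \emph{every} loop, including divergent ones. The paper's own quantum walk never terminates under any scheduler, yet the derivation must still go through; there $wp$ compensates by shrinking the precondition, not by making the loop terminate.

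The paper closes this gap with a different, self-vanishing construction: the \emph{tails of the termination-probability series},
\[
R_k^\eta=\sum_{i=k}^{\infty}\p^1_{\bar{q}}\circ\eta_1^\dag\circ\cdots\circ\p^1_{\bar{q}}\circ\eta_i^\dag\circ\p^0_{\bar{q}}(I),
\]
i.e.\ the probability of terminating at iteration $k$ or later, rather than the probability of surviving the first $k$ iterations (the two differ exactly by the non-termination part). The partial sums of this series equal $(\f_n^\eta)^\dag(I)\le I$ (adjoints reverse composition and the $\p^j_{\bar{q}}$ are self-adjoint), so in finite dimension the series converges and its tails decrease monotonically to $\z$ \emph{automatically}, with no termination assumption. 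Clause (3), in the form $R_{k+1}^\eta=\p^1_{\bar{q}}\left(\eta_1^\dag\left(R_k^{\eta^\ra}\right)\right)$, then holds by inspection of the formula, and clause (1), $wp.\while.M\leinf R_0^\eta$, follows by proving $M_i^\eta\le R_0^\eta$ by induction on $i$, using $M\le I$. Substituting these tail sums for your survival operators repairs the proof; the rest of your outline matches the paper's argument.
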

	
	\begin{proof}(Sketch) Again, the key idea for the proof here is to define the notion of weakest precondition for quantum programs. The basic idea of using a ranking assertion to establish the soundness of rule (WhileT) has been intuitively discussed below Definition~\ref{def:rank}. For the completeness part, we can prove that the set of quantum predicates
\[
			R_i^\eta=
			\sum_{k=i}^\infty \p^1_{\bar{q}}\circ \eta_1^\dag\circ \ldots \circ \p^1_{\bar{q}}\circ \eta_k^\dag \circ \p^0_{\bar{q}}(I),
\]
representing the termination probability of the while loop after $i$ iterations of the loop body,
	constitute a proper ranking assertion to prove that $\vdash_{\tot} \ass{wp.\while.\qassert}{\while}{\qassert}.$
		 For more details, please refer to Appendix~\ref{app:proof}.  \qedhere
\end{proof}
	
	\section{Case studies}
	\label{sec:case}
	To illustrate the effectiveness of the proof systems
	 presented in the last section, we employ them to verify some simple quantum algorithms and protocols.
	
	\subsection{Three Qubit Quantum Error Correction Scheme}
	
	Recall from Example~\ref{exa:err-correctness} that the correctness of the three-qubit quantum error correction scheme can be stated as follows: for any $|\psi\> = \alpha_0|0\> + \alpha_1|1\>\in \h_2$,
	\begin{equation}\label{eq:eccor}
	\models_{\tot}\ass{|\psi\>_q\<\psi|}{\mathit{ErrCorr}}{|\psi\>_{q}\<\psi|}.
	\end{equation}
	First, we note that from rules (Skip) and (Unit),
	\begin{align*}
		&\vdash_{\tot}\ass{[\alpha_0|000\> + \alpha_1|111\>]}{\sskip}{[\alpha_0|000\> + \alpha_1|111\>]}\\
		&\vdash_{\tot}\ass{[\alpha_0|000\> + \alpha_1|111\>]}{q\apply X}{[\alpha_0|100\> + \alpha_1|011\>]}\\
		&\vdash_{\tot}\ass{[\alpha_0|000\> + \alpha_1|111\>]}{q_1\apply X}{[\alpha_0|010\> + \alpha_1|101\>]}\\
		&\vdash_{\tot}\ass{[\alpha_0|000\> + \alpha_1|111\>]}{q_2\apply X}{[\alpha_0|001\> + \alpha_1|110\>]}.
	\end{align*}
	Let $M_i$, $1\leq i\leq 4$, be the four postconditions presented above respectively. Then from (Imp), we have
	for each $S\in \{\sskip,\ q\apply X,\ q_1\apply X,\ q_2\apply X\}$,
\[
\ass{[\alpha_0|000\> + \alpha_1|111\>]_{q,q_1,q_2}}{S}{M_1+M_2+M_3+M_4}.\label{eq:eachs}
\]
	Thus we have the following proof outline:
		\begin{align*}
		&\left\{[\alpha_0|0\> + \alpha_1|1\>]_{q}\right\} & \\		
		& q_1, q_2:= 0;\\		
		&\left\{[\alpha_0|000\> + \alpha_1|100\>]_{q,q_1,q_2}\right\} & \mathit{(Init)}\\		
		& q,q_1\apply \textit{CX};\\
		&\left\{[\alpha_0|000\> + \alpha_1|110\>]_{q,q_1,q_2}\right\} & \mathit{(Unit)}\\
		& q,q_2\apply \textit{CX};\\
		&\left\{[\alpha_0|000\> + \alpha_1|111\>]_{q,q_1,q_2}\right\} & \mathit{(Unit)}\\
		&  \sskip\ \square\ q\apply X\ \square\ q_1\apply X\ \square\ q_2\apply X;\\
		&\left\{[\alpha_0|000\> + \alpha_1|111\>]  + [\alpha_0|001\> + \alpha_1|110\>]\right.&\\
		&\quad  \left.+  [\alpha_0|010\> + \alpha_1|101\>]  +  [\alpha_0|100\> + \alpha_1|011\>] \right\}& \mathit{(NDet)}\\
		& q,q_2\apply \textit{CX};\\
		&\left\{[\alpha_0|000\> + \alpha_1|110\>]  + [\alpha_0|001\> + \alpha_1|111\>]\right.&\\
		&\quad  \left.+  [\alpha_0|010\> + \alpha_1|100\>]  +  [\alpha_0|101\> + \alpha_1|011\>] \right\}& \mathit{(Unit)}\\
		& q,q_1\apply \textit{CX};\\
		&\left\{[\alpha_0|0\> + \alpha_1|1\>]_{q} \otimes ([|00\>] + [|01\>] + [|10\>])_{q_1,q_2}\right. \\
		&\quad \left. + [\alpha_0|1\> + \alpha_1|0\>]_{q} \otimes [|11\>]_{q_1,q_2} \right\}& \mathit{(Unit)}\\
			& \iif\ \m[q_2]\ \then\\
		&\quad\left\{[\alpha_0|0\> + \alpha_1|1\>]\otimes [|0\>] + [\alpha_0|1\> + \alpha_1|0\>]\otimes [|1\>]\right\} & \\			
		& \quad \iif\ \m[q_1]\ \then\\
		&\qquad\left\{[\alpha_0|1\> + \alpha_1|0\>]_{q}\right\} & \\		
		&\qquad q\apply X\\
		&\qquad\left\{[\alpha_0|0\> + \alpha_1|1\>]_{q}\right\} & \mathit{(Unit)}\\
		&\quad \pend\\ 
		&\quad\left\{[\alpha_0|0\> + \alpha_1|1\>]_{q}\right\} & \mathit{(Meas)}\\
		&\pend\\
		&\left\{[\alpha_0|0\> + \alpha_1|1\>]_{q}\right\} & \mathit{(Meas)}
	\end{align*}
	With the soundness of our logic system, this completes the proof of Eq.~\eqref{eq:eccor}.
	
	\subsection{Deutsch Algorithm}
	
	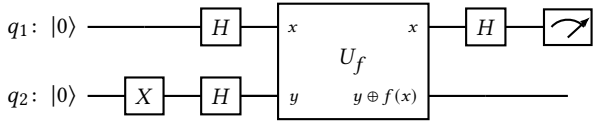
\begin{figure}[t]\centering
		\begin{quantikz}[row sep=0.3cm,column sep=0.5cm]
			\lstick{$q_1\!:\  |0\>$} &\qw &\gate{H} &\gate[wires=2][2cm]{U_f}
			\gateinput{$x$}	\gateoutput{$x$}&\gate{H} &\meter{} \\
			\lstick{$q_2\!:\  |0\>$}& \gate{X}&\gate{H}	&\qw\gateinput{$y$}\gateoutput{$y\oplus f(x)$}&\qw&\qw
		\end{quantikz}
		\caption{Quantum circuit for Deutsch algorithm.}\label{fig-deutsch}
	\end{figure}
	
	Deutsch algorithm~\cite{deutsch1985quantum} is one of the first quantum algorithms which demonstrate a speedup brought by quantum computing.
	Given a boolean function $f : \{0,1\} \rightarrow \{0,1\}$, Deutsch algorithm  can tell whether $f(0)$ equals $f(1)$ or not with just a single evaluation of $f$. In the traditional description of the algorithm, a quantum oracle $U_f$ that maps any state $|x\>\otimes|y\>$ to the state $|x\>\otimes |y\oplus f(x)\>$, where $x,y\in\{0,1\}$, is employed. The circuit for Deutsch algorithm is shown in Figure~\ref{fig-deutsch}. It claims that $f$ is constant  (meaning $f(0) = f(1)$) if the measurement outcome is 0; otherwise, it claims $f$ is balanced (meaning $f(0) \neq f(1)$).
	
	We now show how to describe Deutsch algorithm as a nondeterministic quantum program, and how to verify the correctness of it.  To this end, we note that
	\[
	U_f =
	\begin{cases}
		I_2 \otimes I_2 & \mbox{  if  } f(0)=f(1)=0;\\
		I_2\otimes X & \mbox{  if  } f(0)=f(1)=1;\\
		CX & \mbox{  if  }  f(0)=0 \mbox{ and } f(1)=1;\\
		C^0X & \mbox{  if  } f(0)=1 \mbox{ and } f(1)=0.\\
	\end{cases}
	\]
	where ${CX}$ is the CNOT gate, and ${C^0X}=(X\otimes I_2 )\cdot CX\cdot   (X \otimes I_2)$ which applies $X$ gate on the second qubit conditioning on the first qubit being $|0\>$. In the first two cases $f$ is constant while in the last two cases $f$ is balanced. 
	Then Deutsch algorithm can be written as 
	\begin{align*}
		\mathit{Deutsch}\define&\\
		& q_1,q_2:=0;\\
		& q_1\apply H;\ q_2 \apply X;\ q_2\apply H;\\
		& \iif\ \m_{0,1}[q]\ \then\\
		& \quad (q_1,q_2\apply {CX})\ \square \ (q_1,q_2\apply C^0X)\\
		& \eelse\\
		&\quad  \sskip\ \square\ (q_2\apply X)\\
		&\pend\\
		& q_1\apply H;\\
		&\mymeas\ q_1
	\end{align*}
	Here we introduce an auxiliary qubit $q$ with unknown initial state and use the measurement outcome on $q$ to choose $f$ (or equivalently, $U_f$). Note that for each outcome we have two possibilities for $U_f$, denoted by a nondeterministic choice between them. The statement $\mymeas\ q_1$ is defined as in Example~\ref{exm:order}.
	It is easy to prove from rules (Unit) and (Meas) that the following correctness formula is valid for any quantum assertion $\qassert$:
	\[
	\ass{|0\>_{q_1}\<0|\qassert|0\>_{q_1}\<0| + |1\>_{q_1}\<1|\qassert|1\>_{q_1}\<1|}{\mymeas\ q_1}{\qassert}.
	\]
	
	Note that at the end of the algorithm, both $q$ and $q_1$ are in one of the computational basis $\{|0\>, |1\>\}$. The correctness of Deutsch algorithm can then be stated as follows: 
	\begin{equation}\label{eq:deutcor}
		\models_{\tot}\ass{I}{\mathit{Deutsch}}{(|00\>\<00| + |11\>\<11|)_{q,q_1}}.
	\end{equation}
	That is, the (classical) information encoded in $q_1$ when the program terminates coincides with that encoded in $q$, thus indicating correctly whether $f(0)$ equals $f(1)$ or not.
	
	To prove Eq.~\eqref{eq:deutcor}, we first note from (Unit) that
	\begin{align*}
		&\vdash_{\tot}\ass{[|0\>_q|+-\>_{q_1,q_2}]}{\sskip}{[|0\>_q|+-\>_{q_1,q_2}]}\\
		&\vdash_{\tot}\ass{[|0\>_q|+-\>_{q_1,q_2}]}{q_2 \apply X}{[|0\>_q|+-\>_{q_1,q_2}]}\\
		&\vdash_{\tot}\ass{[|1\>_q|+-\>_{q_1,q_2}]}{q_1,q_2\apply CX}{[|1\>_q|--\>_{q_1,q_2}]}\\
		&\vdash_{\tot}\ass{[|1\>_q|+-\>_{q_1,q_2}]}{q_1,q_2\apply C^0X}{[|1\>_q|--\>_{q_1,q_2}]}
	\end{align*}
where $|\pm\> \define \frac{1}{\sqrt{2}}(|0\> \pm |1\>)$.
	Then we have the following proof outline:
	\begin{align*}
		& \left\{ I \right\} & \\
		& q_1,q_2:=0;\ \\
		& \left\{ [|00\>]_{q_1,q_2} \right\} &  \mathit{(Init), (Seq)}\\
		& q_1\apply H;\ q_2 \apply X;\ q_2\apply H;\\
		& \left\{ [|0+-\>]_{q,q_1,q_2} + [|1+-\>]_{q,q_1,q_2} \right\} &  \mathit{(Unit), (Seq)}\\
		& \iif\ \m_{0,1}[q]\ \then\\
		& \left\{ [|1+-\>]_{q,q_1,q_2}\right\} &\\
		& \quad (q_1,q_2\apply {CX})\ \square \ (q_1,q_2\apply C^0X)\\
		& \left\{ [|1--\>]_{q,q_1,q_2}\right\} & \mathit{(NDet)} \\
		& \eelse\\
		& \left\{ [|0+-\>]_{q,q_1,q_2}\right\} &\\
		&\quad  \sskip\ \square\ (q_2\apply X)\\
		& \left\{ [|0+-\>]_{q,q_1,q_2}\right\} & \mathit{(NDet)}\\
		&\pend\\
		& \left\{[|0+-\>]_{q,q_1,q_2} + [|1--\>]_{q,q_1,q_2}\right\} & \mathit{(Imp), (Meas)}\\
		& \left\{[|0+\>]_{q,q_1} + [|1-\>]_{q,q_1}\right\} & \mathit{(Imp)}\\
		& q_1\apply H;\\
		& \left\{[|00\>]_{q,q_1} + [|11\>]_{q,q_1}\right\} & \mathit{(Unit)}\\
		& \mymeas\ q_1 \\
		& \left\{[|00\>]_{q,q_1} + [|11\>]_{q,q_1}\right\} & \mathit{(Meas)}
	\end{align*}
	Finally, Eq.~\eqref{eq:deutcor} follows from the soundness of our logic system.
	
	\subsection{A Nondeterministic Quantum Walk} \label{exa:nondet-quantum-walk}
	
	To illustrate the utility of our logic system regarding quantum loops, let us consider a revised version of the nondeterministic quantum walk on a circle with four vertices presented in~\cite{li2014termination}. The walk uses a two-qubit system consisting of $q_1$ and $q_2$ as its principle system. It starts in the initial state $|00\>$, and at each step of walk, the following two unitary walk operators 
	\[ \scalebox{.9}{$
	W_1\define \frac{1}{\sqrt{3}}
	\begin{pmatrix}
		1 & 1 & 0 & -1\\
		1 & -1 & 1 & 0\\
		0 & 1 & 1 & 1\\
		1 & 0 & -1 & 1			
	\end{pmatrix},\quad
	W_2\define \frac{1}{\sqrt{3}}
	\begin{pmatrix}
		1 & 1 & 0 & 1\\
		-1 & 1 & -1 & 0\\
		0 & 1 & 1 & -1\\
		1 & 0 & -1 & -1			
	\end{pmatrix}$}
	\]
	are applied on $q_1$ and $q_2$ consecutively. Here $W_1$ and $W_2$ are written with respect to the computational basis $\{|00\>, |01\>, |10\>, |11\>\}$ of $\h_{q_1, q_2}$. However, the order in which these walk operators are applied is nondeterministically chosen.
	We further assume that there exists an absorbing boundary at the subspace spanned by $|10\>$; that is, after each walk step, a projective measurement $\m = \{P_0, P_1\}$ where
	$$P_0\define |10\>\<10|,\qquad P_1\define I_4 - |10\>\<10|$$ is applied.
	If the outcome corresponding to $P_0$ is observed, the whole process terminates; otherwise the next walk step continues.
	
	The walk can be described as the following nondeterministic program:
	\begin{align*}
	\mathit{QWalk}\define&\\
		& q_1,q_2:=0;\\
		& \while\ \m[q_1,q_2]\ \ddo\\
		&\quad \qquad  (q_1,q_2\apply W_1;\ q_1,q_2\apply W_2)\\ &\quad\ \square\quad (q_1,q_2\apply W_2;\ q_1,q_2\apply W_1)\\
		&\pend
	\end{align*}
	It has been shown in~\cite{li2014termination} that $\mathit{QWalk}$ does not terminate if the left branch of the nondeterministic choice is always taken in each iteration. This can be easily seen from the fact that $W_2W_1|00\> = |00\>$. In the following, we show a stronger result using our proof system for nondeterministic quantum programs: in fact, $\mathit{QWalk}$ does not terminate under any scheduler of the nondeterministic choice! For simplicity, we omit the subscript $\{q_1, q_2\}$ of the assertions and super-operators in the discussion below.
	
	First, note that this non-termination property can be stated as follows: 
	\begin{equation}\label{eq:corqwalk}
		\models_{\pal}\ass{I}{\mathit{QWalk}}{\z}.
	\end{equation}
	The reason is, from Definition~\ref{def:correctness}(2), Eq.~\eqref{eq:corqwalk} holds iff for any $\rho\in \d(\h_{q_1, q_2})$
	and $\sigma\in \sem{\mathit{QWalk}}(\rho)$,
	\[
	\tr(\rho) \leq \tr(\rho) - \tr(\sigma).
	\]
	Thus $\tr(\sigma)=0$, meaning that starting from any $\rho$, the probability of $\mathit{QWalk}$ reaching any valid quantum state is 0.
	
	To prove Eq.~\eqref{eq:corqwalk}, we first compute using rule (Unit) that
	{\small
	\begin{align*}
		&\left\{ [|00\>] +  \left[\frac{1}{\sqrt{2}}(|01\> + |11\>)\right] \right\} \\
		& q_1,q_2\apply W_1;\\
		&\left\{\left[\frac{1}{\sqrt{3}} (|00\> +  |01\> + |11\>)\right] + \left[\frac{1}{\sqrt{6}} (-|01\> +2  |10\> + |11\>)\right] \right\}\\
		& q_1,q_2\apply W_2;\\
		&\left\{ [|00\>] + \left[\frac{1}{\sqrt{2}}(|01\> + |11\>)\right]\right\} 
	\end{align*}
	}
	and 
	{\small
	\begin{align*}
		&\left\{ [|00\>] + \left[\frac{1}{\sqrt{2}}(|01\> + |11\>)\right]\right\} &\\
		&\left\{  \left[\frac{1}{3} (-|00\> + 2 |01\> + 2|11\>)\right]  + \left[\frac{1}{3\sqrt{2}}(4|00\> + |01\> + |11\>)\right]\right\}\\
		& q_1,q_2\apply W_2;\\
		&\left\{\left[\frac{1}{\sqrt{3}} (|00\> +  |01\> - |11\>)\right] + \left[\frac{1}{\sqrt{6}} (2|00\> -  |01\> + |11\>)\right] \right\}\\
		& q_1,q_2\apply W_1;\\
		&\left\{ [|00\>] + \left[\frac{1}{\sqrt{2}}(|01\> + |11\>)\right]\right\} 
	\end{align*}}
	Now let $N \define [|00\>] + \left[\frac{1}{\sqrt{2}}(|01\> + |11\>)\right]$, and $S$ the body of the $\while$ loop of $\mathit{QWalk}$. Note that $\p^0(\z)  = \z$ and $\p^1(N) = N$ where for $i=0,1$, $\p^i$ is the super-operator with a single Kraus operator $P_i$. Then by (NDet) we have
	\begin{equation}\label{eq:walkinv}
		\vdash_{\pal}\ass{N}{S}{\p^0(\z) + \p^1(N)}
	\end{equation}
	 Finally, we have 
	 	\begin{align*}
	 	&\left\{ I\right\} \\		
	 	&q_1, q_2:=0;\\
	 	&\left\{ [|00\>] + \left[\frac{1}{\sqrt{2}}(|01\> + |11\>)\right]\right\} \hspace{9em} \mathit{(Init)}\\
	 	& \while\ \m[q_1,q_2]\ \ddo\\
	 	& (q_1,q_2\apply W_1;\ q_1,q_2\apply W_2)\ \square\ (q_1,q_2\apply W_2;\ q_1,q_2\apply W_1)\\
	 	&\pend\\
	 	&\left\{\z\right\} \hspace{15.5em} \mathit{(While), Eq.~\eqref{eq:walkinv}}
	 \end{align*}
	With the soundness theorem~\ref{thm:psc}, this concludes the proof of Eq.~\eqref{eq:corqwalk}.

			\section{A prototype tool implementation}    \label{sec:implementation}

			We have illustrated the utility of our proof systems through a couple of examples in the previous section. It can be seen that even for such simple programs, carrying out formal verification is tedious (albeit routine) and involves a large number of matrix manipulations. To ease the burden on human users so that they can focus on more challenging parts such as specifying invariants for while loops, we implement a prototype proof assistant NQPV to automate routine parts of the verification process.
			Currently, NQPV only supports partial correctness; verification of total correctness is left as future work.

			 NQPV is a pure Python project that does not rely on existing theorem provers such as Isabelle and Coq. We make this decision taking into account the following factors: (1) Python and its various libraries provide
			 powerful matrix manipulation capabilities that facilitate us to compute the pre- and post-conditions of quantum programs and determine the $\le_{\mathit{inf}}$ relation between quantum assertions easily; (2) The main purpose of NQPV is to illustrate the practicality of our proposed logic system, so the ease of use is our top priority. Python's popularity and rich matrix libraries make it convenient and natural to describe quantum programs and assertions; (3) Symbolic verification in Python is difficult, which limits the quantum programs that NQPV can express and verify. But in return, most verification steps can be done automatically in NQPV.
			 
			 NQPV relies on the following Python packages: \texttt{ply} for syntax analysis, \texttt{numpy} for operator calculation, and \texttt{cvxpy} for the SDP solver. NQPV utilises a simple language to define and manage terms of operators and proofs. 
			
			\subsection{User Inputs}\label{sec:userinput}
			
			NQPV takes a correctness formula defined in Sec.~\ref{sec:correctform} as the input. However, to automate the proof for while programs, loop invariants are also expected whenever a while structure is encountered. For illustration, we reconsider the quantum walk example in Sec.~\ref{exa:nondet-quantum-walk}. The code is shown as follows:			
	\lstset{
		basicstyle=\tt,
		keywordstyle=\color{blue}\bfseries,
		morekeywords={def, load, end, proof, skip, show, while, do, inv, if, then, else},
        escapeinside={(*}{*)},
        morecomment=[l][\color{gray}]{//},
		emph={q,q1,q2},
		emphstyle={\color{black}},
		showstringspaces=false,
	}  
	\begin{lstlisting}  
def invN := load "invN.npy" end
// more operators imported ...
def pf := proof[q1 q2] :
    { I[q1] };
    [q1 q2] :=0;
    { inv: invN[q1 q2] };
    while MQWalk[q1 q2] do
        ( [q1 q2] *= W1; [q1 q2] *= W2
        # [q1 q2] *= W2; [q1 q2] *= W1 )
    end;
    { Zero[q1] }
end 
show pf end
	\end{lstlisting}
	Here the unitary operators \texttt{W1}, \texttt{W2} and measurement  \texttt{MQWalk} are defined in Sec.~\ref{exa:nondet-quantum-walk}, and they should be input by the user as \texttt{numpy} matrices. Note that some identifiers such as \texttt{I} and \texttt{Zero} are reserved for commonly used unitary operators, hermitian operators, and measurements.
    Furthermore, loop invariants (\texttt{invN} in this example) specified by the user for while programs should also be \texttt{numpy} matrices.
    
    In addition to the main program describing behaviours of the quantum walk, the above code also includes two extra lines \texttt{\{ I[q1] \}} and \texttt{\{ Zero[q1] \}} to represent the precondition and postcondition of the program. Hence, the body of proof term \texttt{pf} indeed describes the correctness formula presented in Eq.~\eqref{eq:corqwalk}. Finally, the keyword \texttt{inv} marks a loop invariant, and the \texttt{show} command in the last line outputs the proof outline generated by NQPV. 
    
    To further simplify usage, NQPV also allows users to omit preconditions and specify only postconditions. In this case, NQPV outputs the weakest precondition it can compute using the given postcondition and user-supplied loop invariants.
			
			\subsection{Verification Process}
			
			After successfully parsing the input, NQPV inductively constructs proofs according to the logic system  in Fig.~\ref{tbl:psystem} in an automated way. The strategy is to calculate the weakest preconditions in the backward direction, starting from the postcondition of the whole program. For while loops, NQPV will check if the loop invariant provided by the user is a valid one.

			In the end, the assistant compares the verification condition and the precondition proposed by the user (details will be given in the next subsection) and then generates the final result. 
			Again, the following shows the output of NQPV for the quantum walk example:

        \begin{lstlisting}
proof [q1 q2] : 
    { I[q1] };
    (*\textbf{\{ VAR2[q1 q2] \};}*) // the Veri. Con.
    [q1 q2] :=0;
    (*\textbf{\{ invN[q1 q2] \};}*)
    { inv: invN[q1 q2] };
    while MQWalk[q1 q2] do
        (*\textbf{\{ invN[q1 q2] \};}*)
        ( (*\textbf{\{ invN[q1 q2] \};}*) [q1 q2] *= W1;
          (*\textbf{\{ VAR0[q1 q2] \};}*) [q1 q2] *= W2
        # (*\textbf{\{ invN[q1 q2] \};}*) [q1 q2] *= W2;
          (*\textbf{\{ VAR1[q1 q2] \};}*) [q1 q2] *= W1 )
    end;
    { Zero[q1] }        
        \end{lstlisting}

    As is shown, every sub-program statement is annotated with the corresponding pre- and postconditions. Some of them are already defined, such as \texttt{invN}. Other operators, such as \texttt{VAR0}, are generated by NQPV to represent predicates used in the proof outline.
    The detailed information of these operators can be shown using the \texttt{show} command. For example, \texttt{show VAR2 end} for this example will return the two-qubit identity operator, meaning that the verification condition determined by the given postcondition and the loop invariant is \texttt{I[q1 q2]} (thus the original correctness formula is valid).

    In addition to checking the validity of a correctness formula, NQPV can also verify if a user-supplied operator is indeed a loop invariant. For example, if we change the \texttt{invN[q1 q2]} in the code in Sec.~\ref{sec:userinput} into \texttt{P0[q1]} where $\texttt{P0}=\ket{0}\bra{0}$, we will get an error message:
    \begin{lstlisting}
Error: 
Order relation not satisfied: 
    { P0[q1] } <= { VAR0[q1 q2] VAR3[q1 q2] }
...
Error: The predicate '{ P0[q1] }' is not 
    a valid loop invariant.
...
    \end{lstlisting}
           
		\subsection{Determining the $\le_{\mathit{inf}}$ Relation}
		
		One of the key steps in verifying a correctness formula is to determine if two assertions (sets of quantum predicates) satisfy the $\le_{\mathit{inf}}$ relation. For the sake of implementability, we assume all the quantum assertions allowed in NQPV are finite ones.
			
			\begin{lemma}\label{lem:inforder} Let $\qassert$ and $\qassertp$ are finite set of quantum predicates. Then $\qassert\le_{\mathit{inf}} \qassertp$ iff for any $N\in\qassertp$,
				$$
				\forall\rho\in\dhv, \exists M\in \qassert, \tr (M \rho) \leq \tr(N\rho).
				$$
			\end{lemma}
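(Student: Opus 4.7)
The plan is to prove both implications directly from the definition of $\leinf$, namely that $\qassert \leinf \qassertp$ iff $\inf_{M\in\qassert}\tr(M\rho)\le\inf_{N\in\qassertp}\tr(N\rho)$ for every $\rho\in\dhv$. Finiteness of $\qassert$ will be used in exactly one place, where it lets us replace an infimum by a minimum attained at some element of $\qassert$.

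For the easier ($\Leftarrow$) direction, I would fix an arbitrary $\rho\in\dhv$ and, for each $N\in\qassertp$, pick an $M\in\qassert$ (depending on $N$ and $\rho$) with $\tr(M\rho)\le\tr(N\rho)$. Then $\inf_{M'\in\qassert}\tr(M'\rho)\le\tr(M\rho)\le\tr(N\rho)$; taking the infimum over $N\in\qassertp$ on the right yields $\inf_{M'\in\qassert}\tr(M'\rho)\le\inf_{N\in\qassertp}\tr(N\rho)$. Since $\rho$ was arbitrary, this is exactly $\qassert\leinf\qassertp$. Neither side needs to be finite for this step.

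The ($\Rightarrow$) direction is where finiteness of $\qassert$ is essential. Fix $N\in\qassertp$ and $\rho\in\dhv$. Because $\qassert$ is finite, the set $\{\tr(M\rho):M\in\qassert\}$ is a finite subset of $\mathbb{R}$, so its infimum is attained: there exists $M^*\in\qassert$ with $\tr(M^*\rho)=\min_{M\in\qassert}\tr(M\rho)$. Combining this with the hypothesis $\qassert\leinf\qassertp$ gives
\[
\tr(M^*\rho)=\inf_{M\in\qassert}\tr(M\rho)\le\inf_{N'\in\qassertp}\tr(N'\rho)\le\tr(N\rho),
\]
which is the desired witness.

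I do not anticipate a substantial obstacle: the only subtlety is recognising that the quantifier pattern on the right-hand side of the statement ($\forall N\,\forall\rho\,\exists M$) requires the infimum in $\qassert$ to be a minimum, which is why finiteness of $\qassert$ is assumed. Note also that no finiteness assumption on $\qassertp$ is actually used; the lemma as stated is symmetric in hypothesis but the proof only leans on $\qassert$ being finite.
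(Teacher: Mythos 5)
Your proof is correct and follows essentially the same route as the paper, which establishes the equivalence by replacing infima with minima (justified by finiteness) and then unwinding the quantifiers; you merely unfold the paper's chain of equivalences into two explicit directions. Your side observation that only the finiteness of $\qassert$ is actually needed (to ensure the infimum is attained) is accurate and slightly sharpens the paper's phrasing, which invokes finiteness of both sets.
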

			 \begin{proof}
				From the finiteness of $\qassert$ and $\qassertp$, 
\begin{align*}
	&\qassert\le_{\mathit{inf}} \qassertp\\
	&\ \Leftrightarrow \ \forall\rho\in\dhv, \min_{M\in \qassert} \tr (M \rho) \leq \min_{N\in \qassertp} \tr(N\rho)\\
	&\ \Leftrightarrow \ \forall\rho\in\dhv, \forall N\in \qassertp, \exists M\in \qassert, \tr (M \rho) \leq \tr(N\rho)\\
	&\ \Leftrightarrow \ \forall N\in \qassertp, \forall\rho\in\dhv, \exists M\in \qassert, \tr (M \rho) \leq \tr(N\rho). \qedhere
\end{align*}
				 \end{proof}	
			With Lemma~\ref{lem:inforder}, we have the following algorithm to determine if $\qassert\le_{\mathit{inf}} \qassertp$ for finite sets $\qassert$ and $\qassertp$:
			\begin{itemize}
				\item If $\Theta = \{M\}$ is a singleton, then $\qassert\le_{\mathit{inf}} \qassertp$ iff $M\le N$ for all $N\in \qassertp$. This can be done by simply checking if the eigenvalues of $N-M$ are all nonnegative. 
				\item Otherwise, for each $N\in \qassertp$ we try to solve the following SDP problem:
				\begin{align*}
					&\mathrm{minimize}\ &0 \\
					&\mathrm{subject\ to}\ &\forall M \in \Theta, \tr(N\rho) \leq \tr(M\rho) -\epsilon \\
					& & \z \sqsubseteq \rho
				\end{align*}
				Here $\epsilon$ is a user-defined precision that is sufficiently small but positive. This is required by SDP solvers such as MOSEK and CVX. Another reason for the introduction of $\epsilon$ is to make the feasible region close and thus the whole problem an SDP one.  Obviously, if any of the $|\qassertp|$ SDP problems returns a solution, then $\qassert\not\le_{\mathit{inf}} \qassertp$. However, because of the precision parameter, we cannot make sure if $\qassert\le_{\mathit{inf}} \qassertp$ holds even none of the SDP problems returns a solution. Nevertheless, the probability of getting incorrect answers can be negligible if $\epsilon$ is taken sufficiently small.
			\end{itemize}

			\subsection{Related Tools}
			
	Program verification can also be done within an interactive theorem prover such as Isabelle and Coq. The first theorem prover for quantum programs is QHLProver~\cite{Liu.2019}, which implements the program logic in~\cite{ying2012floyd} using Isabelle/HOL. CoqQ~\cite{Zhou.2022} implements the same logic in Coq, with a general and abstract representation of linear operators. Qrhl-tool~\cite{Unruh.2019} is a verification tool embedded in Isabelle that supports relational verification of quantum programs.

        To make a fair comparison, we investigate the difference between NQPV and the existing tools in the following aspects (we only take CoqQ as an example, as QHLProver and Qrhl-tool are built in a similar way to CoqQ):
        \subsubsection*{Reliability} 
        For CoqQ, inference rules are not only defined, but their soundness with respect to the denotational semantics (also formally defined in CoqQ) is proved within Coq.	In contrast, the soundness of inference rules is not checked in NQPV. Nevertheless, a rigorous proof for the soundness of our logic system has been given in Theorem~\ref{thm:psc}. 
        \subsubsection*{Expressiveness}
        Existing tools only deal with deterministic quantum programs, while our prototype implementation supports nondeterministic ones. On the other hand, symbols can be handled in CoqQ, as in theorem provers; whereas only numerical terms are allowed in NQPV, as in Python. 
        In particular, CoqQ is suitable for verifying general (i.e., with an arbitrary number of qubits) algorithms such as Grover algorithm, since the qubit number appears as a symbol in CoqQ.
        \subsubsection*{Automation}
        Due to the abstract representation of linear operators in CoqQ, many properties of quantum predicates are not easy to prove and have to be shown by the user. However, NQPV can take advantage of powerful Python libraries, so many proofs can be automated without user assistance. 
        Therefore, the proof script for NQPV is usually much shorter than the proof written in CoqQ. 
        \subsubsection*{Usability}
        Using the Coq library \texttt{MathComp}, CoqQ develops an abstract representation of linear operators directly based on Hilbert spaces, and a smart way to describe them using labelled Dirac notations. Consequently, users of CoqQ are supposed to have sufficient knowledge of Coq, especially the way linear operators and computer programs are specified in it.
	In NQPV, operators are given concretely as \texttt{numpy} matrices, and programs are represented in a natural way familiar to ordinary Python programmers. As mentioned at the beginning of this section, we take this easy path to implement the prototype since our main purpose is to illustrate the utility of our logic system, rather than a full-blown verification tool. 
        \subsubsection*{Performance}
        The performance of CoqQ is determined by the proofs provided, while that of NQPV is determined by the calculation backend, which in the worst case is exponential in the number of qubits.
        In comparison, it takes a few seconds to verify the general Grover algorithm in CoqQ, and 90 seconds for the 13-qubit Grover algorithm in NQPV.

\balance
			
\section{Conclusion}	\label{sec:conclusion}
	In this paper, we consider quantum programs where nondeterministic choices are allowed. Denotational semantics of such programs is given by lifting semantics of deterministic quantum programs, and we argue that this is the only natural way to define nondeterminism in the quantum setting. 
	For the purpose of verification, we take sets of hermitian operators on the associated Hilbert space to be the assertions of quantum states,
	and propose two proof systems, for partial and total correctness respectively. We show that they are both sound and relatively complete with respect to the corresponding correctness notions. Simple quantum algorithms and protocols, such as the three-qubit bit-flip quantum error correction scheme, Deutsch algorithm, and a nondeterministic quantum walk, are analysed to demonstrate the utility of our proof systems. A lightweight prototype of a proof assistant is implemented to aid in the automated reasoning of nondeterministic quantum programs.
	
	For future works, we are going to investigate how to make use of the nondeterministic choice construct and the verification technique proposed in this paper for quantum program refinement. How to incorporate angelic nondeterminism into the picture is also an interesting topic for future study.  

\begin{acks}
The authors thank Profs Mingsheng Ying and Sanjiang Li for inspiring discussions. This work is partially supported by 
\grantsponsor{2018YFA0306704}{the National Key R\&D Program of China}{} under Grant No
\grantnum{2018YFA0306704}{2018YFA0306704} and
\grantsponsor{DP180100691}{the Australian Research Council}{https://www.arc.gov.au/}
under Grant No.
\grantnum{DP220102059}{DP220102059}. YX was also partially funded by \grantsponsor{}{the Sydney Quantum Academy}{https://sydneyquantum.org/}.


\end{acks}

\appendix

 \section{Weakest (liberal) precondition semantics}\label{sec:wps}

	This section presents an alternative semantics for nondeterministic quantum programs in terms of the weakest (liberal) preconditions. It turns out to be equivalent to the denotational semantics given in Sec.~\ref{sec:denotation}. More importantly, the weakest (liberal) precondition semantics provides a powerful proof technique for the soundness and completeness of our logic systems, as shown in the next section. 
			
 	Let $S\in \prog$. The \emph{weakest precondition semantics} $wp.S$ and \emph{weakest liberal precondition semantics} $wlp.S$ of $S$ are both mappings in
 		$2^\a\ra  2^\a$
 		defined inductively in Fig.~\ref{tbl:wpsemantics}. To simplify notation, we use $xp$ to denote $wp$ or $wlp$ whenever it is applicable for both of them.

	The definitions are similar to those of deterministic quantum programs presented in~\cite{ying2012floyd}. Again, because of the possible nondeterministic choices in the loop body $S$, we have incorporated schedulers $\eta\in \sem{S}^{\N}$ in the weakest (liberal) precondition semantics of $\pwstm$. 
	
	The following lemma shows a duality between the denotational and weakest (liberal) precondition semantics of quantum programs.

 	{\renewcommand{\arraystretch}{2}
	 		\begin{figure*}[t]
			 				\centering
			 				\begin{tabular}{l}
				 					\begin{tabular}{ll}
					 						$xp.\sskip.M = M$ 
					 						&$xp.(\bar{q}\apply U).M  = U_{\bar{q}}^\dag M U_{\bar{q}}$\\
					 						$xp.(\bar{q}:=0).M = \displaystyle\sum_{i=0}^{2^n-1} \quiz M\quzi$ &$xp.S.\qassert = \displaystyle\bigcup_{M\in \qassert} xp.S.M$\\
					 						$xp.(S_0; S_1).M = xp.S_0.(xp.S_1.M)$\hspace{6em} 
					 						&$xp.(S_0\ \square\ S_1).M = xp.S_0.M \cup xp.S_1.M$\\
					 						\smallskip
					 						$wlp.\abort.M = \{I\}$ & $wp.\abort.M = \{\z\}$\\
					 					\end{tabular}\\ 
				 					\begin{tabular}{l}
					 												\smallskip
					 						$xp.(\pmstm).M = \p^1_{\bar{q}}\left(xp.S_1.M\right) + \p^0_{\bar{q}} \left(xp.S_0.M\right)$\\	
					 																		\smallskip
					 						$wlp.(\pwstm).M = \left\{\displaystyle\bigwedge_{i\geq 0} M_i^\eta : \eta \in \sem{S}^\N\right\}$ where for each $\eta$,
					 						$M_0^\eta \define I$, and for any $i\geq 0$,\\
					 						\hspace{10em}
					 						$M_{i+1}^\eta  = \p^0_{\bar{q}}(M) +  \p^1_{\bar{q}} \left(\eta_1^\dag \left(M_i^{\eta^\ra}\right) + I - \eta_1^\dag(I)\right)$
					 						\\ 
					 												\smallskip
					 						$wp.(\pwstm).M =\left\{\displaystyle\bigvee_{i\geq 0} M_i^\eta : \eta \in \sem{S}^\N\right\}$ where for each $\eta$,
					 						$M_0^\eta \define \z$, and for any $i\geq 0$,\\
					 						\hspace{10em}
					 						$M_{i+1}^\eta  = \p^0_{\bar{q}}(M) +  \p^1_{\bar{q}} \left(\eta_1^\dag \left(M_i^{\eta^\ra}\right)\right)$			
					 					\end{tabular}			
				 				\end{tabular}
		 			\caption{Weakest (liberal) precondition semantics for quantum programs, where $xp\in \{wp, wlp\}$. 
			 			}
		 			\label{tbl:wpsemantics}
		 		\end{figure*}
	 	}

 	\begin{lemma}\label{lem:wpwlp}
	 		Let $S\in \prog$, $\qstate\in \d(\h_{\QVar})$, $M$ be a quantum predicate, and $\qassert$ a quantum assertion. Then
	 		\begin{enumerate}
		 			\item $wp.S.M = \left\{\e^\dag (M) : \e\in \sem{S}\right\}$;
		 			\item $wlp.S.M = \left\{\e^\dag (M) + I - \e^\dag (I) : \e\in \sem{S}\right\}$;
		 			\item $	\Exp(\qstate \models wp.S.\qassert)= \inf \left\{	\Exp(\sigma \models \qassert) : \sigma\in \sem{S}(\qstate) \right\}$; 
		 			\item $	\Exp(\qstate \models wlp.S.\qassert)=$\\ $\inf \left\{	\Exp(\sigma \models \qassert) + \tr(\rho) - \tr(\sigma) : \sigma\in \sem{S}(\qstate) \right\}$.
		 		\end{enumerate}
	 	\end{lemma}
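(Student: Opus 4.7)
I would prove clauses (1) and (2) by structural induction on $S$, and derive (3) and (4) as short consequences of (1) and (2) via trace duality. The base cases ($\sskip$, $\abort$, $\bar{q}:=0$, $\bar{q}\apply U$) have singleton $\sem{S}$ and reduce to direct computation: in each case the single super-operator $\e\in\sem{S}$ is explicitly known, and a side-by-side comparison with Fig.~\ref{tbl:wpsemantics} verifies both formulas, the only subtlety being $\abort$ where $\e = 0$ gives $wp.\abort.M=\{0\}$ while $\e^\dag(M)+I-\e^\dag(I)=I$ gives $wlp.\abort.M=\{I\}$. Sequential composition uses $(\f\circ\e)^\dag = \e^\dag\circ\f^\dag$, and nondeterministic choice corresponds directly to set union, so both inductive cases are routine. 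The measurement case uses $(\p^i_{\bar{q}})^\dag=\p^i_{\bar{q}}$ (since $P^i$ is a projector), additivity of the adjoint, and, for $wlp$, the identity $\p^0_{\bar{q}}(I)+\p^1_{\bar{q}}(I)=I$, which cleanly distributes the $I-\e^\dag(I)$ correction across the two branches.

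\textbf{Main obstacle: the while loop.} Fix a scheduler $\eta\in\sem{S}^\N$ and let $\e^\eta\define\bigvee_n\f_n^\eta\in\sem{\pwstm}$, where $\f_n^\eta$ is as in Eq.~\eqref{eq:defF}. I would first show by induction on $i\geq 0$, using Eq.~\eqref{eq:ra} of Lemma~\ref{lem:whilesem} applied to the shifted scheduler $\eta^\ra$, that
\[
M_{i+1}^\eta = (\f_i^\eta)^\dag(M) \quad (wp\text{ case}), \qquad M_{i+1}^\eta = (\f_i^\eta)^\dag(M) + I - (\f_i^\eta)^\dag(I) \quad (wlp\text{ case}),
\]
where $M_i^\eta$ is as in Fig.~\ref{tbl:wpsemantics}. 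The crucial step is then \emph{continuity of the adjoint} with respect to the pointwise order $\preceq$: for any increasing chain $\e_n\preceq\e_{n+1}$ with supremum $\e$ and any hermitian $X$, trace duality combined with $\tr(\e(\rho)X)=\bigvee_n\tr(\e_n(\rho)X)$ forces $\e_n^\dag(X)$ to converge pointwise to $\e^\dag(X)$, monotonically from below when $X\geq 0$ and from above when $X\leq 0$. Applied with $X=M\geq 0$ this yields $\bigvee_i M_i^\eta = (\e^\eta)^\dag(M)$ in the $wp$ case. For $wlp$, rewrite $\e_n^\dag(M)+I-\e_n^\dag(I) = \e_n^\dag(M-I) + I$ and apply the same principle with $X=M-I\leq 0$ to obtain $\bigwedge_i M_i^\eta = (\e^\eta)^\dag(M) + I - (\e^\eta)^\dag(I)$ (the initial $M_0^\eta=I$ is harmless since it is already the top element). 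Ranging over $\eta$ then matches the right-hand sides of (1) and (2) for $\pwstm$ with the definitions in Fig.~\ref{tbl:wpsemantics}.

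\textbf{Clauses (3) and (4).} These follow from (1) and (2) by a short calculation. For (3), unfold $\Exp(\rho \models wp.S.\qassert) = \inf_{M\in\qassert,\;\e\in\sem{S}}\tr(\e^\dag(M)\rho)$, swap the two infima, apply trace duality to obtain $\inf_{\e}\inf_{M}\tr(M\e(\rho))$, and recognise the inner infimum as $\Exp(\sigma\models\qassert)$ for $\sigma = \e(\rho) \in \sem{S}(\rho)$. Clause (4) is analogous, with the extra $I-\e^\dag(I)$ contributing $\tr(\rho)-\tr(\sigma)$ after pairing with $\rho$ and invoking trace preservation of $\id_{\h_{\QVar}}$ on $\rho$.
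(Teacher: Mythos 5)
Your proof follows essentially the same route as the paper's: structural induction on $S$ for clauses (1)--(2), with the while case reduced via Lemma~\ref{lem:whilesem} to an induction relating the $M_i^\eta$ of Fig.~\ref{tbl:wpsemantics} to $(\f_i^\eta)^\dag$, and clauses (3)--(4) obtained by the same trace-duality calculation $\tr(\e^\dag(N)\rho)=\tr(N\e(\rho))$. If anything, your version is slightly more careful on two points the paper glosses over: your shifted identity $M_{i+1}^\eta=(\f_i^\eta)^\dag(M)$ is the correct form (the paper's Eq.~\eqref{eq:mfeq} as written fails at $i=0$, since $M_0^\eta=\z$ while $(\f_0^\eta)^\dag(M)=\p^0_{\bar{q}}(M)$ --- harmless in the limit), and you explicitly justify the exchange $\bigvee_i(\f_i^\eta)^\dag(M)=\bigl(\bigvee_i\f_i^\eta\bigr)^\dag(M)$ (and its decreasing counterpart for $wlp$ via $X=M-I\le\z$), which the paper uses without comment.
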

 	\begin{proof}
	 		We prove clause (1)  by induction on the structure of $S$. The basis cases are easy from the definition. For the induction step, we only show the following two cases as examples.
	
	 		\begin{itemize}
		 			\item Let $S\define \pmstm$. Then 
		 			\begin{align*}
			 				&wp.S.M\\
			 				& = \p^1_{\bar{q}}(wp.S_1.M) + \p^0_{\bar{q}} (wp.S_0.M)\\
			 				&=  \left\{\p^1_{\bar{q}}\circ \f^\dag (M) : \f\in \sem{S_1}\right\} + \left\{\p^0_{\bar{q}}\circ \e^\dag (M) : \e\in \sem{S_0}\right\}\\
			 				&=  \left\{(\f\circ \p^1_{\bar{q}}+\e \circ \p^0_{\bar{q}})^\dag (M) : \e\in \sem{S_0}, \f\in \sem{S_1}\right\}\\
			 				&=  \left\{\g^\dag (M) : \g\in \sem{S}\right\}.
			 			\end{align*}
                        \begin{sloppypar}
			 			\item Let $S\define \whilestm{S'}$. For any $\eta\in \sem{S'}^\N$ and $i\geq 0$, let $M_i^\eta$ be defined as in Fig.~\ref{tbl:wpsemantics} for $wp.(\whilestm{S'}).M$, and $\f_i^\eta$ be defined as in Figure~\ref{fig:densemantics} for $\sem{\whilestm{S'}}$.	First, from Lemma~\ref{lem:whilesem} it is easy to show by induction on $i$ that
                        \end{sloppypar}
			 			\begin{equation}\label{eq:mfeq}
				 			\forall i\geq 0, \ \forall \eta\in \sem{S'}^\N, \ 
				 			M_i^\eta = (\f_i^\eta)^\dag (M).
				 		\end{equation}
			 			Thus we have
			 			\[
			 			\bigvee_{i\geq 0} M_i^\eta = \bigvee_{i\geq 0} \left(\f_i^\eta\right)^\dag (M) = \left(\bigvee_{i\geq 0} \f_i^\eta\right) ^\dag (M),
			 			\]
			 			and so
			 \[
			 			wp.S.M = \left\{\bigvee_{i\geq 0} M_i^\eta : \eta\in \sem{S'}^\N\right\} =  \left\{\e^\dag(M) : \e\in \sem{S}\right\}
			 \]	
			 from the definition of $\sem{S}$.
			 	\end{itemize}
		
		 		The proof for clause (2) is similar; the only difference in proving the case for $\whilestm{S'}$ is that instead of Eq.~\eqref{eq:mfeq} we have to prove 
		 					\[
		 		\forall i\geq 0, \ \forall \eta\in \sem{S'}^\N, \ 
		 		I - M_i^\eta = (\f_i^\eta)^\dag (I-M)
		 		\]
                    \begin{sloppypar}\noindent
		 		where $M_i^\eta$'s are defined in a similar way as in Fig.~\ref{tbl:wpsemantics} but for $wlp.(\whilestm{S'}).M$. However, this is also easy by induction on $i$.
		          \end{sloppypar}
		 		For clause (3), we calculate
		 		\begin{align*}
			 			\Exp(\qstate \models wp.S.\qassert) &= \inf \left\{	\tr(M\qstate) : N\in \qassert, M\in wp.S.N \right\}\\
			 			&= \inf \left\{	\tr(\e^\dag(N)\qstate) : N\in \qassert, \e\in \sem{S} \right\}\\
			 			&= \inf \left\{	\tr(N\e(\qstate)) : N\in \qassert, \e\in \sem{S} \right\}\\
			 			&= \inf \left\{	\tr(N\sigma) :  N\in \qassert, \sigma\in \sem{S}(\qstate) \right\}\\
			 			&= \inf \left\{	\Exp(\sigma \models \qassert) : \sigma\in \sem{S}(\qstate) \right\}
			 		\end{align*}
		 		where the second equality is from clause (1). Finally, clause (4) follows from (2) with a similar argument.
		 	\end{proof}
	
	 	The next lemma shows that the weakest (liberal) precondition of a while program is a fixed point of some functor on $\a$.
	 	
	 	\begin{lemma}\label{lem:recpredicate} \begin{sloppypar} For any quantum predicate $M$, let $\qassert = xp.(\pwstm).M$. Then\end{sloppypar}
		 		\[
		 		\qassert = \p^0_{\bar{q}}(M) + \p^1_{\bar{q}}(xp.S.\qassert).
		 		\]
		 	\end{lemma}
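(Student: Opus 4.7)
The plan is to reduce the claim to the recursive characterization of $\sem{\pwstm}$ supplied by Lemma~\ref{lem:whilesem}, by first passing from the weakest (liberal) precondition semantics to the language of super-operators via Lemma~\ref{lem:wpwlp}. Concretely, Lemma~\ref{lem:wpwlp}(1,2) rewrites
\[
xp.(\pwstm).M = \{\Phi_\e(M) : \e\in\sem{\pwstm}\},
\]
where $\Phi_\e(M)=\e^\dag(M)$ for $xp=wp$ and $\Phi_\e(M)=\e^\dag(M)+I-\e^\dag(I)$ for $xp=wlp$, and the RHS analogously using $\sem{S}$.

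For the $wp$ case, I would prove the two inclusions by explicit construction. Given $\e\in\sem{\pwstm}$, Lemma~\ref{lem:whilesem} yields a decomposition $\e=\p^0_{\bar{q}}+\e'\circ\f\circ\p^1_{\bar{q}}$ with $\e'\in\sem{\pwstm}$ and $\f\in\sem{S}$. Taking adjoints (and using linearity of $(\cdot)^\dag$) gives
\[
\e^\dag(M)=\p^0_{\bar{q}}(M)+\p^1_{\bar{q}}\bigl(\f^\dag((\e')^\dag(M))\bigr).
\]
Since $(\e')^\dag(M)\in wp.(\pwstm).M$ and hence $\f^\dag((\e')^\dag(M))\in wp.S.(wp.(\pwstm).M)$, the left-to-right inclusion follows. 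The reverse inclusion runs the same identity backwards: any $N=\p^0_{\bar{q}}(M)+\p^1_{\bar{q}}(\f^\dag((\e')^\dag(M)))$ with $\e'\in\sem{\pwstm}$, $\f\in\sem{S}$ is $\e^\dag(M)$ for the $\e$ supplied by Lemma~\ref{lem:whilesem}.

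For the $wlp$ case the computation is more delicate because of the correction term $I-\e^\dag(I)$. The key algebraic observation is $\p^0_{\bar{q}}(I)+\p^1_{\bar{q}}(I)=P^0+P^1=I$, using that $\{P^0,P^1\}$ is a projective measurement. Hence $I-\p^0_{\bar{q}}(I)=\p^1_{\bar{q}}(I)$, and for $\e=\p^0_{\bar{q}}+\e'\circ\f\circ\p^1_{\bar{q}}$ one checks
\[
\e^\dag(M)+I-\e^\dag(I)=\p^0_{\bar{q}}(M)+\p^1_{\bar{q}}\bigl(\f^\dag(N)+I-\f^\dag(I)\bigr),
\]
where $N=(\e')^\dag(M)+I-(\e')^\dag(I)$. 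Now $N\in wlp.(\pwstm).M$ by definition, so by Lemma~\ref{lem:wpwlp}(2) applied to $S$ the inner bracketed term lies in $wlp.S.(wlp.(\pwstm).M)$, giving the inclusion. The reverse inclusion is proved symmetrically by reassembling $\e$ from the given $\e'$ and $\f$.

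The main obstacle is the bookkeeping for $wlp$: one must make sure the $I$-terms line up correctly on both sides of the decomposition, and this hinges entirely on the projector identity $\p^0_{\bar{q}}(I)+\p^1_{\bar{q}}(I)=I$. Once that identity is in hand, the rest is a routine algebraic manipulation parallel to the $wp$ argument, and the lemma follows uniformly for $xp\in\{wp,wlp\}$.
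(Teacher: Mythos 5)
Your proof is correct and takes essentially the same route as the paper's: both translate $xp.(\pwstm).M$ into super-operator form via Lemma~\ref{lem:wpwlp}, decompose each $\e\in\sem{\pwstm}$ as $\p^0_{\bar{q}}+\e'\circ\f\circ\p^1_{\bar{q}}$ using Eq.~\eqref{eq:whilesem}, take adjoints, and reassemble. The only difference is presentational: the paper writes the $wp$ case as a chain of set equalities and dismisses $wlp$ as ``similar,'' whereas you spell out the $wlp$ bookkeeping explicitly, correctly identifying the completeness identity $\p^0_{\bar{q}}(I)+\p^1_{\bar{q}}(I)=I$ as the point on which it hinges.
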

	 	\begin{proof} 
		 	 Let $\while \define \pwstm$. Then
		 		\begin{align*}
			 			&wp.\while.M \\
			 			&= \left\{\e^\dag(M) : \e\in \sem{\while}\right\} \\
			 			&= \left\{\p^0_{\bar{q}}(M) + \p^1_{\bar{q}}\circ
			 			\f^\dag \circ \g^\dag(M)  : \f\in \sem{S}, \g\in \sem{\while}\right\}\\
			 			&= \p^0_{\bar{q}}(M) + \p^1_{\bar{q}}\left(wp.S.\left\{
			 			\g^\dag(M)  :  \g\in \sem{\while}\right\}\right)\\
			 			&= \p^0_{\bar{q}}(M) + \p^1_{\bar{q}}\left(wp.S.\left(wp.\while.M\right)\right)
			 		\end{align*}
		 		where the second equality is from Eq.~\eqref{eq:whilesem} while the other ones follow from Lemma~\ref{lem:wpwlp}(1).
		 		The case for $wlp$ is similar.
		 	\end{proof}

	To conclude this section, we prove a lemma which is useful in our later discussion.
	
	\begin{lemma}\label{lem:rulecond} For any correctness formula $\ass{\qassert}{S}{\qassertp}$,
		\begin{enumerate}
			\item $\models_{\tot} \ass{\qassert}{S}{\qassertp}$ if and only if $\qassert \leinf wp.S.\qassertp$; 
			\item $\models_{\pal} \ass{\qassert}{S}{\qassertp}$ if and only if $\qassert \leinf wlp.S.\qassertp$.
		\end{enumerate}
		\begin{proof}
			For clause (1), we compute 
			\begin{align*}
				&\models_{\tot} \ass{\qassert}{S}{\qassertp}\\
				 &\ \Leftrightarrow \ \forall \rho, \Exp(\qstate \models \qassert)\leq \inf \left\{	\Exp(\sigma \models \qassertp) : \sigma\in \sem{S}(\qstate) \right\} \\
				&\ \Leftrightarrow \ \forall \rho, \Exp(\qstate \models \qassert)\leq \Exp(\qstate \models wlp.S.\qassertp) \\
				&\ \Leftrightarrow \ \forall \rho, \inf \left\{\tr(M\qstate) : M\in \qassert\right\}\leq \inf\left\{\tr(N\qstate) : N\in wlp.S.\qassertp\right\} \\
				&\ \Leftrightarrow \ \qassert \leinf wp.S.\qassertp
			\end{align*}
		where the second equivalence is from Lemma~\ref{lem:wpwlp}(3) while the third one from Eq~\eqref{eq:defexp}.
			The case for $wlp$ in clause (2) is similar.
		\end{proof}
	\end{lemma}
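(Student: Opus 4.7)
The plan is to prove both clauses by direct unfolding of the definitions, using Lemma~\ref{lem:wpwlp} parts (3) and (4) as the crucial bridge. The whole argument is a chain of biconditionals, and there is no real obstacle once one has Lemma~\ref{lem:wpwlp} in hand; the only minor care needed is to keep straight which of $wp$ vs $wlp$ is being used in which clause.

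For clause (1), I would start from the definition of $\models_{\tot}\ass{\qassert}{S}{\qassertp}$ (Definition~\ref{def:correctness}(1)), which says that for every $\rho\in\d(\h_{\QVar})$,
\[
\Exp(\rho\models\qassert)\;\le\;\inf\{\Exp(\sigma\models\qassertp):\sigma\in\sem{S}(\rho)\}.
\]
By Lemma~\ref{lem:wpwlp}(3), the right-hand side equals $\Exp(\rho\models wp.S.\qassertp)$. Then by the definition of the expectation in Definition~\ref{def:satisfaction}, each side is an infimum of traces: $\inf_{M\in\qassert}\tr(M\rho)$ on the left and $\inf_{N\in wp.S.\qassertp}\tr(N\rho)$ on the right. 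Requiring this inequality for all $\rho$ is exactly the definition of $\qassert\leinf wp.S.\qassertp$, so the equivalence in clause (1) follows.

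For clause (2), the argument is the same skeleton but invokes Lemma~\ref{lem:wpwlp}(4) instead. Unfolding $\models_{\pal}$ via Definition~\ref{def:correctness}(2), the right-hand side of the defining inequality is $\inf\{\Exp(\sigma\models\qassertp)+\tr(\rho)-\tr(\sigma):\sigma\in\sem{S}(\rho)\}$, which by Lemma~\ref{lem:wpwlp}(4) is precisely $\Exp(\rho\models wlp.S.\qassertp)$. The same two rewrites as in clause (1) then convert the universally-quantified inequality into $\qassert\leinf wlp.S.\qassertp$.

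The hardest aspect is really just bookkeeping: making sure that the infima over $\sem{S}(\rho)$ and the infima over assertions commute correctly with the trace, which is already absorbed into Lemma~\ref{lem:wpwlp}(3,4). Since those lemmas do the heavy lifting (they are the ones that require structural induction on $S$ and use the fixed-point description of while-loop semantics from Lemma~\ref{lem:whilesem}), the present lemma is essentially a definition-chasing corollary and needs no induction of its own.
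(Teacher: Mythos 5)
Your proof is correct and follows essentially the same route as the paper: unfold the correctness definition, apply Lemma~\ref{lem:wpwlp}(3) (resp.\ (4)) to identify the infimum over $\sem{S}(\rho)$ with $\Exp(\rho\models wp.S.\qassertp)$ (resp.\ $\Exp(\rho\models wlp.S.\qassertp)$), and then recognise the universally quantified trace inequality as the definition of $\leinf$. In fact your write-up is slightly cleaner than the paper's, which spells out only clause (1), contains a typo writing $wlp.S.\qassertp$ where $wp.S.\qassertp$ is meant in the intermediate steps, and cites the wrong equation for the expectation definition, whereas you correctly invoke Definition~\ref{def:satisfaction} and make the use of Lemma~\ref{lem:wpwlp}(4) for clause (2) explicit.
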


\section{Proof details}\label{app:proof}
\subsection{Proof of Theorem~\ref{thm:psc}}
	 		Soundness:  We need only to show that each rule in Fig.~\ref{tbl:psystem} is valid in the sense of partial correctness. Take the rule (While) as an example; the others are simpler. From (Union), it suffices to consider the special case when $\qassertp = \{M\}$ for some quantum predicate $M$. Let
	 		$$\models_{\pal}  \ass{\qassert}{S}{\p^0_{\bar{q}}(M) + \p^1_{\bar{q}}(\qassert)}.$$ Then $\qassert\leinf wlp.S.\left(\p^0_{\bar{q}}(M) + \p^1_{\bar{q}}(\qassert)\right)$. We now prove by induction on $i$ that
	 		\[
	 		\forall i\geq 0, \forall \eta\in \sem{S}^{\N},\ \p^0_{\bar{q}}(M) + \p^1_{\bar{q}}(\qassert)  \leinf M^\eta_i
	 		\]
	 		where $M^\eta_i$ is defined as in Fig.~\ref{tbl:wpsemantics} for the $wlp$ semantics of $\while\define\pwstm$. The case when $i=0$ is trivial. Then we calculate
	 		\begin{align*}
		 			&\p^0_{\bar{q}}(M) + \p^1_{\bar{q}}(\qassert)\\ 	& \leinf  \p^0_{\bar{q}}(M) + \p^1_{\bar{q}}\left(wlp.S.\left(\p^0_{\bar{q}}(M) + \p^1_{\bar{q}}(\qassert)\right)\right)\\
		 			& \leinf  \p^0_{\bar{q}}(M) + \p^1_{\bar{q}}\left(wlp.S.M^{\eta^{\ra}}_i\right)\\
		 			& \leinf  \p^0_{\bar{q}}(M) +  \p^1_{\bar{q}} \left(\eta_1^\dag \left(M_i^{\eta^{\ra}}\right) + I - \eta_1^\dag(I)\right)\\
		 			& = M^\eta_{i+1},
		 		\end{align*}
	 		where the first inequality follows from Lemma~\ref{lem:propleinf}(1), the second one from the induction hypothesis and that $wlp.S$ is monotonic with respect to $\leinf$, and the third one  from Lemma~\ref{lem:wpwlp} and that $\eta_1\in \sem{S}$.
	 		Thus $$\p^0_{\bar{q}}(M) + \p^1_{\bar{q}}(\qassert) \leinf wlp.\while.M,$$ and so 
	 		$$\models_{\pal} \ass{\p^0_{\bar{q}}(M) + \p^1_{\bar{q}}(\qassert)}{\while}{M}$$ from Lemma~\ref{lem:rulecond}.
	
	 		Completeness: By Lemma~\ref{lem:rulecond} and the (Imp) rule, it suffices to show that for any $\qassert$ and $S'$,
	 		$$\vdash_{\pal} \ass{wlp.S'.\qassert}{S'}{\qassert}.$$
	 		Again, we take the case for loops as an example.  For any $M\in \qassert$, let $\qassertp \define wlp.\while.M$.
	 		By induction, we have 
	 		$\vdash_{\pal} \ass{wlp.S.\qassertp}{S}{\qassertp}.$
	 		Note from Lemma~\ref{lem:recpredicate} that
	 		$	\qassertp = \p^0_{\bar{q}}(M) + \p^1_{\bar{q}}\left(wlp.S.\qassertp\right).$
	 		Then we have
	 		$\vdash_{\pal} \ass{\qassertp}{\while}{M}$ by rule (While).
	 		Finally, the desired result  
	 		$$\vdash_{\pal} \ass{wlp.\while.\qassert}{\while}{\qassert}$$
	 		follows from (Union) and Lemma~\ref{lem:propleinf}(2).

\subsection{Proof of Theorem~\ref{thm:total}}
                    \begin{sloppypar}
	 		Soundness:  We need only to show that each rule of the proof system is valid in the sense of total correctness. Take rule (WhileT) as an example. Again, from (Union), it suffices to consider the special case when $\qassertp = \{M\}$ for some quantum predicate $M$. Let 
	 		\[
	 		\models_{\tot}	\ass{\qassert}{S}{\p^0_{\bar{q}}(M) + \p^1_{\bar{q}}(\qassert)},
	 		\]
	 		and $\{R_i^\eta : i\geq 0, \eta\in \sem{S}^{\N}\}$  be a $\p^0_{\bar{q}}(M) + \p^1_{\bar{q}}(\qassert)$-ranking assertion for $\pwstm$.  
	 		Then $\qassert\leinf wp.S.\left(\p^0_{\bar{q}}(M) + \p^1_{\bar{q}}(\qassert)\right)$.
	 		We now prove by induction on $i$ that
	 		$$ \forall i\geq 0, \forall \eta\in \sem{S}^{\N},\ \p^0_{\bar{q}}(M) + \p^1_{\bar{q}}(\qassert)  \leinf M^\eta_i + R_i^\eta$$
	 		where $M^\eta_i$ is defined as in Fig.~\ref{tbl:wpsemantics} for the $wp$ semantics of $\pwstm$. The case when $i=0$ is from the assumption that $\p^0_{\bar{q}}(M) + \p^1_{\bar{q}}(\qassert)\leinf R_0^\eta$. Then we calculate
                    \end{sloppypar}
	 		\begin{align*}
		 			&\p^0_{\bar{q}}(M) + \p^1_{\bar{q}}(\qassert)\\ 	& \leinf  \p^0_{\bar{q}}(M) + \p^1_{\bar{q}}\left(wp.S.\left(\p^0_{\bar{q}}(M) + \p^1_{\bar{q}}(\qassert)\right)\right)\\
		 			& \leinf  \p^0_{\bar{q}}(M) + \p^1_{\bar{q}}\left(wp.S.\left(M^{\eta^{\ra}}_i + R^{\eta^{\ra}}_i\right)\right)\\
		 			& \leinf  \p^0_{\bar{q}}(M) +  \p^1_{\bar{q}} \left(\eta_1^\dag \left(M_i^{\eta^{\ra}}\right)\right) + \p^1_{\bar{q}} \left(\eta_1^\dag \left(R_i^{\eta^{\ra}}\right)\right) \\
		 			& \leinf M^\eta_{i+1} + R^\eta_{i+1},
		 		\end{align*}
	 		where  the first inequality follows from Lemma~\ref{lem:propleinf}(1), the second one from the induction hypothesis and that $wp.S$ is monotonic with respect to $\leinf$, the third one from Lemma~\ref{lem:wpwlp} and that $\eta_1\in \sem{S}$, and the last one from Eq.~\eqref{eq:compact}.
	 		Thus $$\p^0_{\bar{q}}(M) + \p^1_{\bar{q}}(\qassert) \leinf wp.(\pwstm).M$$
	 		by noting that  $\bigwedge_i R_i^\eta = \z$ for any $\eta$, and so 
	 		$$\models_{\tot} \ass{\p^0_{\bar{q}}(M) + \p^1_{\bar{q}}(\qassert)}{\pwstm}{M}$$ from Lemma~\ref{lem:rulecond}.
	
	 		Completeness: By Lemma~\ref{lem:rulecond} and the (Imp) rule, it suffices to show that for any $\qassert$ and $S'$,
	 		$$\vdash_{\tot} \ass{wp.S'.\qassert}{S'}{\qassert}.$$
	 		Again, we take the case for while loops as an example. Let $\while \define \pwstm$ and $M\in \qassert$.
	 		By induction, we have 
	 		$\vdash_{\tot} \ass{wp.S.\qassertp}{S}{\qassertp}$ where  
	 		$\qassertp \define wp.\while.M$.
	 		Note also from Lemma~\ref{lem:recpredicate} that
	 		$\qassertp = \p^0_{\bar{q}}(M) + \p^1_{\bar{q}}\left(wp.S.\qassertp\right).$ To finish the proof, we have to construct a $\qassertp$-ranking assertion for $\while$.
	
	 		For any $\eta\in \sem{S}^{\N}$ and $k\geq 0$, let 
	 				\begin{equation}\label{eq:induc}
		 			R_k^\eta=
		 			\sum_{i=k}^\infty \p^1_{\bar{q}}\circ \eta_1^\dag\circ \ldots \circ \p^1_{\bar{q}}\circ \eta_i^\dag \circ \p^0_{\bar{q}}(I).
		 		\end{equation}
	 	We would like to show that the set of $R_k^\eta$ satisfy the conditions in Definition~\ref{def:rank}. First, we prove by induction on $i$ that 
	 	\begin{equation}\label{eq:indrank}
		 		\forall i\geq 0, \forall \eta\in \sem{S}^{\N},\ M_i^\eta \le R_0^\eta
		 	\end{equation}
	 		where $M^\eta_i$ is defined as in Fig.~\ref{tbl:wpsemantics} for $wp.\while.M$,
	 	which then implies that $\qassertp\leinf R_0^\eta$. The base case of Eq.~\eqref{eq:indrank} where $i=0$ is trivial since $M_0^\eta = \z$. 		We further calculate that
	 	\begin{align*}
		 		M_{i+1}^\eta  &= \p^0_{\bar{q}}(M) +  \p^1_{\bar{q}} \left(\eta_1^\dag \left(M_i^{\eta^\ra}\right)\right)\\
		 		&\le \p^0_{\bar{q}}(I) +  \p^1_{\bar{q}} \left(\eta_1^\dag \left(R_0^{\eta^\ra}\right)\right)\\
		 		& = R_0^\eta.
		 	\end{align*} 
	 	For the second condition of Definition~\ref{def:rank}, we note that  from Eq.~\eqref{eq:induc}, $R_{k+1}^\eta \le R_k^\eta$ and $\bigwedge_k R_k^\eta = \z$. Furthermore, it is easy to see that $
	 	R_{k+1}^{\eta} = \p^1_{\bar{q}} \left(\eta_1^\dag \left(R_k^{\eta^{\ra}}\right)\right)$.

	 	Now using rule (WhileT), we have		
	 	$\vdash_{\tot} \ass{\qassertp}{\while}{M}.$
	 		Finally, the desired result  
	 		$$\vdash_{\tot} \ass{wp.\while.\qassert}{\while}{\qassert}$$
	 		follows from (Union) and Lemma~\ref{lem:propleinf}(2).

%
%
%
%
%






\section{Artifact Appendix}
\renewcommand {\bf} {\bfseries}
\subsection{Abstract}

NQPV is a verification assistant prototype of nondeterministic quantum programs. It implements the verification logic of partial correctness in the numerical form, with soundness guaranteed by the theory and experiments.
It is a Python project, depending on \texttt{numpy}, \texttt{cvxpy} and \texttt{ply} packages.
The three examples in our paper: the three-qubit bit-flip quantum error correction scheme, Deutsch algorithm and quantum walk, are encoded and verified numerically.
The evaluation is integrated into a step-by-step Jupyter notebook. It shows that the tool can check the program syntax, automatically calculate the verification condition, check whether it is satisfied by the precondition, and return a proof outline of the correctness formula if so.
Compared to other verification tools based on theorem provers like Coq or Isabelle, NQPV is weaker in expressiveness but stronger in automation. 
The whole evaluation can be conducted using a computer with 32 GB memory.

\subsection{Artifact check-list (meta-information)}

{\em 
}

{\small
\begin{itemize}
  \item {\bf Algorithm: } Verification of nondeterministic quantum  programs.
  \item {\bf Run-time environment: } Python 3. The project is developed with Python 3.9.
  \item {\bf Metrics: } Expressiveness of the programs and quantum assertions; Automation of the verification; Time and memory consumption for the verification of larger programs.
  \item {\bf Output: } Numerical verification conditions saved in a computer file.
  \item {\bf Experiments: } Integrated in a Jupyter notebook.
  \item {\bf How much disk space required (approximately)?: } Only the performance test is resource consuming. 8 GB is sufficient for the experiments here.
  \item {\bf How much time is needed to prepare workflow (approximately)?: } 5 minutes.
  \item {\bf How much time is needed to complete experiments (approximately)?: } 15 minutes to go through the Jupyter notebook.
  \item {\bf Publicly available: } Yes, on GitHub, Zenodo and PyPI.
  \item {\bf Code licenses (if publicly available)?: } Apache-2.0
  \item {\bf Archived (provide DOI)?: } https://doi.org/10.5281/zenodo.7564087    
\end{itemize}
}

\subsection{Description}

\subsubsection{How to access}

This is a Python project on GitHub: \url{https://github.com/LucianoXu/NQPV}.
For evaluation purpose, please clone the "Article Release".
The project itself is about 1 MB in size.

This project is also uploaded on PyPI as a package: \url{https://pypi.org/project/NQPV/}.

\subsubsection{Hardware dependencies}
Only the performance test is resource consuming.
A laptop computer with 32 GB memory and 8 GB disk storage is sufficient for the evaluation.

\subsubsection{Software dependencies}
Python packages: \texttt{numpy} for matrix calculation, \texttt{cvxpy} for the SDP solver, and \texttt{ply} for syntax parsing.



\subsection{Installation}
Experiments with the source code:
\begin{enumerate}
\item Prepare a Python 3 environment (This project was developed with Python 3.9).
\item Install package dependencies with this command: 

        \texttt{pip install cvxpy ply}
        
\item Clone the NQPV project or download the "Article Release".
\item Open a console at the root folder of the project and run the test script with

        \texttt{python test\_install.py}

        If no error is reported, then the installation is successfully completed.
\end{enumerate}
NQPV as a Python package: run the command \texttt{pip install NQPV}
\subsection{Experiment workflow}
Find the Jupyter notebook \texttt{evaluate.ipynb} in the root folder of source. 
This notebook contains a brief introduction to the tool and integrates the verification experiments of all examples in our paper.

Every individual experiment is done with a \texttt{.nqpv} file containing the NQPV program code, 
and a \texttt{.py} python script to prepare the operators and invoke the verification.

\subsection{Evaluation and expected results}

The three examples in the paper: the three-qubit bit-flip quantum error correction scheme, Deutsch algorithm and quantum walk are encoded as program codes in the Jupyter notebook. The notebook also contains some examples to demonstrate the ability of NQPV to reject false correctness claims.
The required operators (special unitary operators or measurements) are prepared and saved as \texttt{numpy} matrices.

It is expected that the verification is automatically completed and a proof outline is printed. We can check every intermediate condition in the proof outline using the \texttt{show} command. The verification condition of each example is compared with the given precondition.

Finally, a performance test of NQPV on larger qubit numbers is presented. For the Grover algorithm, NQPV takes several minutes and 32 GB memory to finish the 
verification of the 13 qubit case.

\subsection{Experiment customisation}
The examples in the notebook demonstrate the syntax of describing a verification task in NQPV.

To conduct a customised experiment, we can:
\begin{enumerate}
    \item Encode the verification task in the \texttt{example.nqpv} file. 
        It contains the program to be verified, the pre- and post-conditions, and the loop invariants.
    \item Prepare a python script to produce the operators used in \texttt{example.nqpv} as binary \texttt{numpy} matrix files.
        Then invoke the verification using the commend

        \texttt{nqpv.verify("example.nqpv")}
    \item Run the python script to conduct the verification.
        
    \item After obtaining the proof outline, modify \texttt{example.nqpv} and repeat to print the conditions during verification.
\end{enumerate}

 To avoid path errors, the above experiments should be conducted in the root folder of NQPV source, or use NQPV installed from PyPI.

\subsection{Notes}
MacOS users may encounter an installation error due to the package \texttt{cmake}. In this case, try to run \texttt{pip install cmake} first, then add \texttt{cmake} to PATH manually, and finally install NQPV. 

We recommend using \texttt{conda} to create a new experimental environment. The deployment was also tested on a Linux cloud server.






\bibliographystyle{ACM-Reference-Format}
\bibliography{ref}

\end{document}